\documentclass[12pt]{article}
\usepackage[utf8]{inputenc}
\usepackage[english]{babel}
\usepackage{hyperref}
\usepackage{amsthm,thmtools}
\usepackage{amsfonts}
\usepackage{amssymb}
\usepackage{mathtools}
\usepackage{amsmath}
\usepackage{comment}

\newtheorem{proposition}{Proposition}[section]
\newtheorem{corollary}{Corollary}[section]
\newtheorem{theorem}{Theorem}[section]
\newtheorem{lemma}{Lemma}[section]
\newtheorem{remark}[theorem]{Remark}


\usepackage{cite}
\usepackage{subcaption}
\usepackage{graphicx}
\usepackage{geometry}
\geometry{top=4.5cm,left=3cm,right=3cm,bottom=3.5cm,heightrounded,bindingoffset=5mm}
\usepackage{subfiles}
\usepackage{mathtools}
\usepackage{pgfplots}
\pgfplotsset{/pgf/number format/use comma,compat=newest}
\usepackage{url}
\usepackage{tikz}
\usetikzlibrary{arrows,decorations.pathmorphing,backgrounds,positioning,fit,petri,patterns}

\newcommand{\GO}[1]{{\color{red}{GO:} #1}}

\title{Limit theorems for the cubic mean-field Ising model}
\usepackage{authblk}
\author{Pierluigi Contucci, Emanuele Mingione, Godwin Osabutey}
\affil{Department of Mathematics, University of Bologna}

\begin{document}
\date{\today}

\maketitle

\begin{abstract}
We study a mean-field spin model with three- and two-body interactions. The equilibrium measure for large volumes is shown to have three pure states, the phases of the model. They include the two with opposite magnetization and an unpolarized one with zero magnetization, merging at the critical point. We prove that the central limit theorem holds for a suitably rescaled magnetization, while its violation with the typical quartic behavior appears at the critical point. 
\end{abstract}

\section{Introduction}

In this paper, we investigate the mean-field Ising spin model with quadratic and cubic interactions. The interest in such a model comes from two large fields of research. The first is condensed matter physics, where the three-body interaction plays a role in the description of the phase separation phenomena of some magnetic alloys \cite{SubLeb1999} lacking spin-flip symmetry. Those physical systems cannot be described by the sole use of a two-body interaction, while a three-body term captures some features of their behavior \cite{Kadanoff1971}. This fact is well paralleled by the Ginibre theorem about functions of spin configurations that are fully classified by an orthonormal base of $k$-body interactions \cite{Ginibre1970}. Those physical phenomena are well described by statistical mechanics models on regular lattices in finite ($d$=2,3) dimensions. While some of those models have an exact solution in very special cases \cite{BaxterWu1973,BaxterWu1974}, it is well known that the mean-field approximation provides an analytically viable setting and a fair description of the phase separation. In those cases, the term mean-field approximation is understood in the sense of a special class of probability measure where the Boltzmann-Gibbs variational principle is optimised: instead of minimizing the free energy over all probability measures, one restricts it to product measures on single spins \cite{FroyenSH1976, Bidaux1986}. 

The other field in which the three-body interactions came to play a role is that of the applications to complex systems, in particular those of socio-technical nature \cite{ContucciKO2022} where the social network structure with long-range interaction represents a realistic description of the phenomenon and not an approximation of its finite-dimensional version \cite{ACMM2017, Battiston2020Beyond_pairwise,Bianconi2021Book, Benson2018}. 
In this case, from a mathematical perspective, the introduction of the three-body interaction entails moving from a  graph-theoretical environment of vertices and edges to a richer hypergraph setting where the three-body terms, representing the faces of the hypergraph, are also taken into account. 

The presence of the cubic interactions brings technical difficulties in the analysis of the model. In particular, the non-convex energy contribution due to the cubic power prevents the use of the Hubbard-Stratonovich transform, which instead is very efficient in the  case of quadratic interactions. More precisely, even if the thermodynamic limit of the free energy can be easily computed by large deviation arguments, the fluctuations of the order parameter cannot be analysed with the classical rigorous methods for a mean-field system with pairwise interaction \cite{Ellis85,EllisNewman78,EllisNR80}.
In order to overcome this obstacle we need a fine control on the $N$-asymptotic behavior of the partition function that is obtained by a method similar to that recently introduced in \cite{Mukherjee2021}.  

This paper presents a rigorous analysis of the mean-field model with three- and two-body interactions in a zero magnetic field. We show that the infinite-volume properties of the model display new phenomena that are absent in the quadratic mean-field case. In particular, we prove that the equilibria of the system include not only  positively and negatively polarized states but also an unpolarized stable state in the presence of a non-zero cubic term that breaks the spin-flip symmetry. Finally, we also study the fluctuation of the magnetization in the entire phase space, specifying the behavior at phase separation and at the critical point. The critical exponent for the magnetization, moreover, takes on a value of zero towards the unpolarized directions of the phase space, and phase transitions can occur in the antiferromagnetic region. 
 
This paper is organised as follows: Section \ref{sec2:def and results} contains the formal definition of the model as well as a statement of the main results. In Section \ref{sec3:proofs}, we study the properties of the consistency equation that describes the system in its  stationary equilibrium state. These properties provide an analytical description of the system's phase diagram and the magnetization's limiting behavior, as well as the computation of the critical exponents. Finally, Section \ref{sec4:conclusion} contains conclusions and perspectives, and the Appendices \ref{appendix:A} and \ref{appendix:B} contain technical and concentration results used throughout the work.

\section{Definitions and main results}\label{sec2:def and results}
Let us consider $N$ spins $\sigma=(\sigma_i)_{i\leq N}\in \{ -1, +1\}^N$ interacting  through an Hamiltonian of the form
\begin{equation}\label{Cubic_model}
    H_N(\sigma) = -\dfrac{K}{3N^2} \sum_{i,j,k = 1}^N  \sigma_i\sigma_j\sigma_k - \dfrac{J}{2N}\sum_{i,j = 1}^N \sigma_i\sigma_j - h\sum_{i = 1}^N \sigma_i \; ,
\end{equation}
where $(K,J,h)\in\mathbb{R}^3$, $K$ and $J$ tune the  interactions among triples and pairs of spins, respectively, while $h$ represents an external field acting on the system. When $K=0$, the previous Hamiltonian reduces to the well-know Curie-Weiss case. In this work we will concentrate on the case $h=0$ and use the parameter $K$ as a spin-flip symmetry breaking term reducing \eqref{Cubic_model} to an Hamiltonian that can be represented as  
\begin{equation}\label{KJ model}
H_N(\sigma) = - N\left(\frac{K}{3}\, m_N^3(\sigma)+ \frac{J}{2}\, m_N^2(\sigma)\right)
\end{equation} 
where $m_N$ is the magnetization per particle:
\begin{equation}\label{magetisation}
    m_N(\sigma) = \frac{1}{N} \sum_{i = 1}^{N} \sigma_i.
\end{equation}

The expression \eqref{KJ model} highlights the mean-field nature of the model. The  Boltzmann-Gibbs  probability measure associated to $H_N$  is 
\begin{equation}\label{Gibbs}
    \mu_{N} (\sigma) = \frac{e^{-  H_N(\sigma)}}{Z_N},
\end{equation}
where $Z_N = \sum_{\sigma\in\{-1,+1\}^N}\exp\left(-H_N(\sigma)\right)$ is the partition function. In equation \eqref{Gibbs}, we set the usual inverse temperature $\beta$ to $1$ without loss since it has been reabsorbed in the parameters of the model. Notice that since the Hamiltonian \eqref{KJ model} is invariant 
under the transformation $K \mapsto -K$, and $\sigma_i \mapsto -\sigma_i$ for $i=1,...,N$, one can study the model only for $K>0$ without loss.

Our aim is to obtain a complete characterization of the model's \textit{phase diagram}, an analysis of the asymptotic distribution of the magnetization in the presence and absence of phase transitions, the \textit{fluctuations} of the suitably rescaled magnetization \eqref{magetisation} w.r.t. the Boltzmann-Gibbs measure \eqref{Gibbs} at and away from the critical point, and the computation of the \textit{critical exponents}. 

All the above  properties are strictly related to the analytical properties of the \textit{free energy} of the system, which is the starting point of our analysis. Let us define  the thermodynamic pressure, i.e., the generating functional as:

\begin{equation}\label{pressurepp}
p_N = \dfrac{1}{N}\,\log Z_N  \; .
\end{equation}
Notice that $p_N$ equals the free energy up to a minus sign. 
The thermodynamic limit of \eqref{pressurepp} can be easily computed applying  Varadhan's  integral lemma \cite{Ellis85, DZ98}, obtaining:

\begin{proposition} Given $(K,J) \in \mathbb{R}^2$ the limiting pressure of \eqref{pressurepp} admits the following variational representation:
\begin{equation}\label{varpressure}
p := \lim_{N\to\infty} p_N = \sup_{m\in[-1,1]} \phi(m),
\end{equation}
where $\phi(m)= u(m)- I(m)$ with
\begin{equation}\label{energy}
u(m)= \dfrac{K}{3} m^3 +\dfrac{J}{2} m^2 
\end{equation}
is the energy contribution and
\begin{equation}\label{entropy}
I(m)=\frac{1-m}{2}\log\left(\frac{1-m}{2}\right)+\frac{1+m}{2}\log\left(\frac{1+m}{2}\right)
\end{equation}
is the binary entropy contribution. 
\end{proposition}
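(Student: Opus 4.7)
The plan is to apply the Laplace method, equivalently Varadhan's integral lemma combined with Cramér's theorem. Since the Hamiltonian \eqref{KJ model} depends on the configuration $\sigma$ only through the magnetization $m_N(\sigma)$, the partition function can be reorganised as a sum over the lattice $\mathcal{M}_N := \{-1 + 2k/N : k = 0, 1, \ldots, N\}$ of admissible magnetization values:
\[
Z_N \;=\; \sum_{m \in \mathcal{M}_N} \binom{N}{(1+m)N/2}\, e^{N u(m)}.
\]
Stirling's formula gives $\binom{N}{(1+m)N/2} = e^{-N I(m) + O(\log N)}$ uniformly for $m$ in compact subsets of $(-1,1)$, so that $p_N = N^{-1}\log \sum_{m \in \mathcal{M}_N} e^{N\phi(m) + O(\log N)}$, where $\phi = u - I$ is continuous on $[-1,1]$ under the convention $0\log 0 = 0$.

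From here I would derive matching upper and lower bounds. The upper bound follows by dominating the sum by $(N+1)$ times its largest term, yielding $\limsup_N p_N \leq \sup_{m\in[-1,1]}\phi(m)$. For the lower bound, fix $m^{*} \in (-1,1)$ and choose $m_N \in \mathcal{M}_N$ with $m_N \to m^{*}$; retaining only the corresponding summand and using continuity of $\phi$ yields $\liminf_N p_N \geq \phi(m^{*})$, and supremising over $m^{*}$ closes the argument. The endpoints $m = \pm 1$ require no special treatment since $\phi$ extends continuously there.

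A slicker alternative is to invoke Varadhan's integral lemma directly: writing $Z_N = 2^N\, \mathbb{E}_{\mathrm{unif}}[\exp(Nu(m_N))]$ with the uniform product measure on $\{-1,+1\}^N$, Cramér's theorem for Rademacher variables equips $m_N$ with a large deviations principle at speed $N$ and rate function $I(m) + \log 2$. Since $u$ is continuous and bounded on $[-1,1]$, the lemma delivers $p = \log 2 + \sup_m(u(m) - I(m) - \log 2) = \sup_m \phi(m)$. Given the concreteness of the setup there is no substantive obstacle; the only bookkeeping to watch is the $\log N$ correction from Stirling (or the polynomial prefactor in Cramér's theorem), which vanishes once divided by $N$.
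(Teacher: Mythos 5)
Your proposal is correct, and your ``slicker alternative'' is precisely the paper's own argument: the authors give no proof beyond invoking Varadhan's integral lemma (with Cram\'er's theorem supplying the LDP for $m_N$ under the uniform measure, rate function $I(m)+\log 2$), exactly as in your last paragraph. Your first, more elementary route --- decomposing $Z_N$ over the admissible magnetization values and controlling the binomial coefficient by Stirling --- is also sound and is in fact the same machinery the paper deploys in Appendix~\ref{appendix:B} (there via Talagrand's two-sided bound $\frac{1}{L\sqrt{N}}e^{-NI(x)}\leq A_N(x)\leq e^{-NI(x)}$, which incidentally disposes of the uniformity issue near $m=\pm1$ that your Stirling estimate leaves slightly informal); that route buys the finer asymptotics needed for the CLT, whereas for the leading-order pressure alone the Varadhan citation suffices.
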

The critical points of \eqref{varpressure}  satisfy the consistency equation, 
\begin{equation}\label{mean-field eqn}
    m = \tanh(Km^2+Jm).
\end{equation}
A careful analysis shows that, among the solutions of \eqref{mean-field eqn},  the function $\phi(m)$ in \eqref{varpressure} can have one or two global maximizers in the interval $(-1,1)$ for fixed $(K,J)$ (see Figure \ref{KJ}). 
\begin{figure}[ht!]
\includegraphics[width=14cm, height=8.3cm]{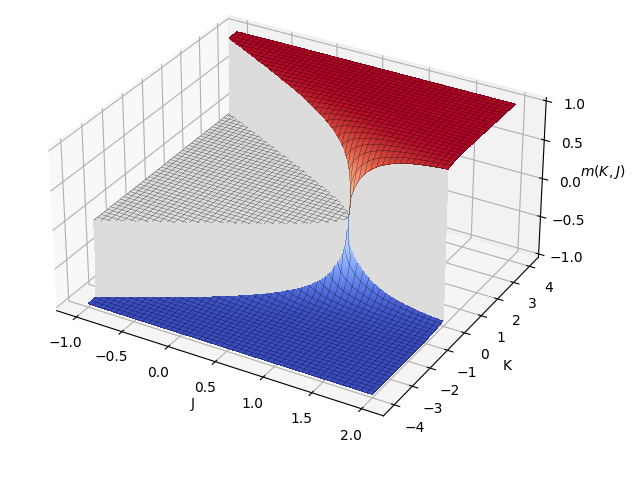}
\caption{Stable solutions of the mean-field equation as a function of $K$ and $J$. There are three stable phases presented here: the positive polarized phase depicted in red, the unpolarized phase given as the gray plateau, and the negative polarized phase denoted by the blue color. At the critical point, $(K,J)=(0,1)$, the three phases of the cubic model as well as the two phases of the Curie-Weiss plane ($K=0$) coalesce.}\label{KJ}
\end{figure}

In particular, we can divide the parameter space $(K,J)\in \mathbb{R}_+\times \mathbb{R}$  accordingly to the following:

\begin{proposition}[Phase diagram]\label{prop uniq}
For any $K>0$, there exists $J=\gamma(K)$ defined in Proposition \ref{wall} such that the function  $m \mapsto \phi(m)$ has a unique maximum point $m^*$ for $(K,J)\in (\mathbb{R}_+\times\mathbb{R}) \backslash \gamma$. Moreover, on the curve $\gamma$ there are two global maximizers, $0=m_0<m_1$ and the limit as $K\rightarrow0$ of $\gamma(K)$ identifies the critical point $(K_c,J_c)=(0,1)$ where the magnetization takes the value $m_c=0$.
\end{proposition}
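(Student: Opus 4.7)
The plan is to fix $K>0$ and analyze the interior critical points of $\phi$ on $(-1,1)$ as $J$ varies, locating the unique $J$-value at which two global maximizers coexist. Since $\phi'(m)=Km^2+Jm-\arctanh(m)\to\mp\infty$ as $m\to\pm 1$, every global maximizer of $\phi$ is an interior critical point, i.e.\ a solution of the consistency equation \eqref{mean-field eqn}. Now $m=0$ is always such a critical point, and the non-zero ones are conveniently parametrized by $J=\psi(m)-Km$, where $\psi(m):=\arctanh(m)/m$ (extended by $\psi(0):=1$). A short power-series computation shows that $\psi$ is even, real-analytic, strictly increasing on $[0,1)$ with $\psi(0)=1$ and $\psi(m)\to\infty$ as $m\to 1^-$; moreover $\psi'(0)=0$ and $\psi'$ is strictly increasing on $[0,1)$.

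From these properties the bifurcation picture in $J$ is immediate. On $(-1,0)$ the map $m\mapsto \psi(m)-Km$ is strictly decreasing from $\infty$ to $1$, while on $(0,1)$ it decreases from $1$ to a strict minimum $J_{\min}(K)<1$ attained at the unique $m^*_+(K)$ solving $\psi'(m^*_+)=K$, and then increases back to $\infty$. Reading off preimages and combining with the sign of $\phi''(0)=J-1$ yields the classification of interior local maxima: for $J\le J_{\min}(K)$ or $J=1$, $\phi$ has a unique local maximum; for $J\in(J_{\min},1)$, two local maxima at $m=0$ and $m_1(J)\in(m^*_+,1)$; for $J>1$, two local maxima $m_-(J)\in(-1,0)$ and $m_+(J)\in(m^*_+,1)$.

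The core of the argument is to compare the competing local maxima in the two regimes where two exist. By the envelope theorem, along any smooth branch $m_*(J)$ of critical points $\frac{d}{dJ}\phi(m_*(J))=m_*(J)^2/2$. For $J>1$, the consistency equations together with the evenness of $\psi$ give $\psi(m_+)-Km_+=J=\psi(|m_-|)+K|m_-|$, forcing $m_+>|m_-|$ by strict monotonicity of $\psi$ on $[0,1)$; hence $\Delta(J):=\phi(m_+)-\phi(m_-)$ satisfies $\Delta'(J)>0$, and since $m_-\to 0$ as $J\to 1^+$ one has $\Delta(1^+)>0$, so $\phi(m_+)>\phi(m_-)$ throughout $J>1$ and the global maximum is unique. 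For $J\in(J_{\min},1)$, the envelope formula makes $J\mapsto\phi(m_1(J))$ continuous and strictly increasing: at $J=J_{\min}$ the point $m_1$ merges with the horizontal-tangent inflection $m^*_+$ and $\phi(m^*_+)<\log 2=\phi(0)$, whereas at $J=1$ the positive critical point is the unique global maximum so $\phi(m_+)>\log 2$. The intermediate value theorem then produces a unique $\gamma(K)\in(J_{\min}(K),1)$ with $\phi(m_1(\gamma(K)))=\log 2=\phi(0)$: this is the coexistence curve, and the global maximum is unique off it.

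For the limit $K\to 0^+$, the defining relation $\psi'(m^*_+)=K$ together with $\psi'(0)=0$ and the strict monotonicity of $\psi'$ on $[0,1)$ forces $m^*_+(K)\to 0$, so $J_{\min}(K)=\psi(m^*_+)-Km^*_+\to 1$; the sandwich $J_{\min}(K)<\gamma(K)<1$ yields $\gamma(K)\to 1$, and any subsequential limit of $m_1(K,\gamma(K))$ satisfies $m=\tanh(m)$ in the limit, so $m_1\to 0$. The three equilibria thus coalesce at $(K_c,J_c,m_c)=(0,1,0)$. The most delicate step is the strict dominance $\phi(m_+)>\phi(m_-)$ for $J>1$, which hinges on the inequality $m_+>|m_-|$ extracted from the consistency equation; the remaining work is essentially one-dimensional analysis of $\phi'$.
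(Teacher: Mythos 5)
Your proof is correct, and its backbone is the same as the paper's: classify the critical points of $\phi$ by a monotonicity analysis of a rearranged consistency equation, compare the two competing local maxima through the quantity $\Delta(J)=\phi(m_1)-\phi(m_0)$, show $\Delta$ is strictly increasing in $J$ via the envelope identity $\frac{d}{dJ}\phi(m_*(J))=m_*^2/2$, apply the intermediate value theorem to get a unique $\gamma(K)$, and sandwich $\Psi(K)<\gamma(K)<1$ to send $K\to0^+$. The differences are in the implementation. You parametrize the nonzero critical points by $J=\psi(m)-Km$ with $\psi(m)=\arctanh(m)/m$, whereas the paper intersects the line $Jm$ with $g(m,K)=\arctanh(m)-Km^2$; these are equivalent, but your version buys something concrete: since $\phi'(m)=m\,(J-h(m))$ with $h(m)=\psi(m)-Km$, one gets $\phi''=-m\,h'(m)$ at any critical point, so the nondegeneracy $\phi''(m_1)<0$, $\phi''(m_2)<0$ falls out of the monotonicity of $h$ on each branch — the paper instead proves this by contradiction arguments and invokes Berge's maximum theorem plus the implicit function theorem (its Proposition \ref{regularity}). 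Conversely, for $J>J_c$ you spend effort comparing the positive maximizer with the negative one via $m_+>|m_-|$ and the envelope theorem; the paper dispatches all negative candidates in one line from $\phi(\bar m)-\phi(-\bar m)=\tfrac{2K}{3}\bar m^3>0$ for $\bar m>0$, which you could have used to shorten that step. Your justification that $\phi(m^*_+)<\phi(0)$ at $J=J_{\min}=\Psi(K)$ (because $\phi'\le 0$ on $(0,1)$ there) and that $\gamma(K)\to1$, $m_1\to0$ as $K\to0^+$ matches the paper's endpoint evaluations of $\Delta$ and its limit $\lim_{K\to0^+}\Psi(K)=J_c$. No gaps.
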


In physical terms, the presence of two global maximizers  corresponds to the existence of two  different thermodynamic equilibrium phases, whereas the curve $\gamma$ represents the  coexistence curve. Let's note that $m_0$ and $m_1$ represent a stable paramagnetic state and a positively polarised state, respectively. The paramagnetic state is characterized by the absence of spontaneous magnetic order and the presence of symmetry between the up and down spin, with no preference for either direction. The jump from the paramagnetic state to the polarized state, namely when the magnetization jumps from $m_0$ to $m_1$, represents a \textit{first-order phase transition}, which is markedly different from the quadratic mean-field model ($K=0$) having a second-order phase transition in $J$. More precisely if we denotes by $m^*(K,J)$ the unique maximizer of $\phi$, for any $\bar{K}>0$ there exists   $\bar{J}=\gamma(\bar{K})\in (-\infty,1)$ such that 
$$
0=\lim_{J\to \bar{J}^-} m(J,\bar{K})\neq  \lim_{J\to \bar{J}^+} m(J,\bar{K})>0.
$$
This behavior  is somehow reminiscent of the Curie-Weiss Potts model analyzed in \cite{EllisWang1990} where for any value of the parameter $q$ a first order phase transition is observed.
Numerical simulations  of the  phase diagram described in Proposition \ref{prop uniq} can be seen in Figure \ref{fig:maxi}.

\begin{figure}[ht!]
\includegraphics[width=13cm]{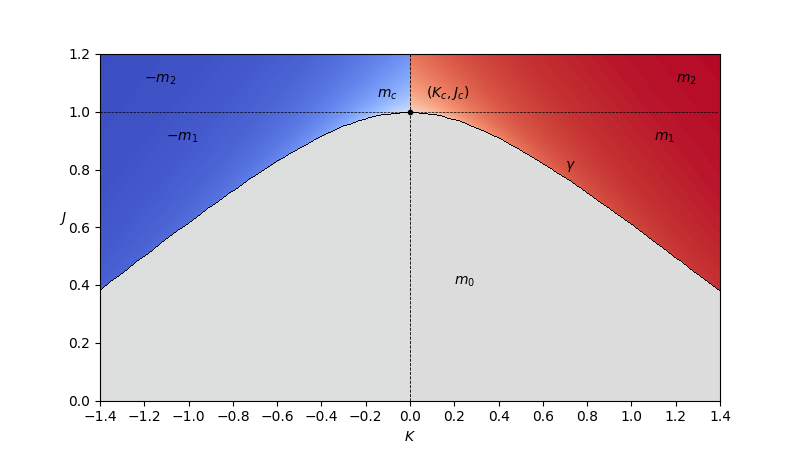}
    \caption{Phase diagram of the model with coexistence curve $\gamma$ and the critical point $(K_c,J_c)$ in the ($K,J$) plane.}
    \label{fig:maxi}
\end{figure}

In the standard Curie-Weiss model, when $J>0$ we know that as soon as $h>0$ one obtains a positive magnetization. The reason is that the energy contribution due to $h$ favors only spins aligned with $sign(h)$. On the contrary, in our system, $J,K>0$, the energy contribution  due to $K$ can be minimized by  configurations containing  both up and down spin signs. This implies that the  entropy contribution can dominate also for small but non-zero $K$, giving a zero magnetization.

The  next theorem contains the law of large numbers and the central limit theorem for the distribution of $m_N(\sigma)$ with respect to the Boltzmann-Gibbs measure. 

\begin{theorem}[Asymptotic distribution of the magnetization]\label{asymptotic_mag}  
Consider the 
Hamiltonian in \eqref{KJ model}, then the following holds:
\begin{enumerate}
    \item For $ (K,J)\in (\mathbb{R}_+\times\mathbb{R})   \backslash (\gamma\cup (K_c,J_c))$ the function  $\phi(m)$ in \eqref{varpressure} has a unique global maximizer $m^*$ such that $\phi''(m^*)<0$ and 

     \begin{equation}\label{lln_m*}
        m_N \xrightarrow[N\rightarrow\infty]{\mathcal{D}} \delta_{m^*}.
    \end{equation}
     Moreover,
     \begin{equation}\label{asymptotic_uniq}
        N^{\frac{1}{2}}(m_N-m^*) \xrightarrow[N\rightarrow\infty]{\mathcal{D}} \mathcal{N}\bigg(0,-\dfrac{1}{\phi''(m^*)}\bigg).
    \end{equation}

    \item Given  $(K,J)\in\gamma$ we  denote by $m_0<m_1$ the two global  maximizers of $\phi(m)$. For $i\in\{0,1\}$ we define the quantity
    \begin{equation}\label{rho_i mi}
        \rho_i := \frac{[(m_i^2-1)\phi''(m_i)]^{-\frac{1}{2}}}{[(m_0^2-1)\phi''(m_0)]^{-\frac{1}{2}}+[(m_1^2-1)\phi''(m_1)]^{-\frac{1}{2}}}.
    \end{equation}
    Then we have that \begin{equation}\label{lln_mi}
        m_N \xrightarrow[N\rightarrow\infty]{\mathcal{D}} \sum_{i\in\{0,1\}}\rho_i\delta_{m_i}.
    \end{equation}
    
    Moreover let   $A_i\subseteq[-1,1]$ be an interval containing $m_i$ in its interior such that $\phi(m_i)>\phi(m)$ for all $m\in cl(A_i)\backslash\{m_i\}$, then

    \begin{equation}\label{asymptotic_mi}
        N^{\frac{1}{2}}(m_N-m_i)\big|\{m_N\in A_i\} \xrightarrow[N\rightarrow\infty]{\mathcal{D}} \mathcal{N}\bigg(0,-\dfrac{1}{\phi''(m_i)}\bigg).
    \end{equation}

    \item At the critical point ($K_c,J_c$), we have that
     \begin{equation}\label{lln_mc}
        m_N \xrightarrow[N\rightarrow\infty]{\mathcal{D}} \delta_{0}.
    \end{equation}
    Moreover, 
    \begin{equation}\label{asymptotic_cpt}
        N^{\frac{1}{4}}\,m_N \xrightarrow[N\rightarrow\infty]{\mathcal{D}} C \exp{\bigg(\dfrac{\phi^{(4)}(0)}{24} x^4\bigg)} dx = C \exp{\bigg(\frac{-x^4}{12} \bigg)}dx,
    \end{equation}
    where $\phi^{(4)}(0) =-2$ denote the fourth derivative of $\phi(m)$ evaluated at $m=0$ and
    
    $C^{-1}={\displaystyle \int_{-\infty}^\infty} \exp{\bigg(\dfrac{-x^4}{12} \bigg)}dx = \frac{\sqrt[4]{3} \; \Gamma(\frac{1}{4})}{\sqrt{2}}$.
\end{enumerate}
\end{theorem}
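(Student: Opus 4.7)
The plan is to work directly from the exact combinatorial representation
\[
Z_N = \sum_{k=0}^N \binom{N}{k}\exp\!\Bigl(N\bigl(\tfrac{K}{3}m^3+\tfrac{J}{2}m^2\bigr)\Bigr),\qquad m=\tfrac{2k}{N}-1,
\]
together with Stirling's formula, uniform on compact sub-intervals of $(-1,1)$:
\[
\binom{N}{(1+m)N/2}=\sqrt{\frac{2}{\pi N(1-m^2)}}\,e^{-NI(m)}\bigl(1+O(N^{-1})\bigr).
\]
This reduces the law of $m_N$ under $\mu_N$ to a discrete measure whose weight at $m$ is proportional to $(1-m^2)^{-1/2}e^{N\phi(m)}$, with $\phi$ the functional in \eqref{varpressure}. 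All three parts of the theorem then follow from a local Laplace analysis near the global maximizers of $\phi$, combined with the exponential large-deviation bound $\mu_N(m_N\notin U)\le e^{-cN}$ for every open neighborhood $U$ of the maximizer set, itself a direct consequence of the variational representation \eqref{varpressure}.

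For Part~1, I would set $m=m^*+x/\sqrt N$ and Taylor-expand
\[
N\phi(m)=N\phi(m^*)+\tfrac12\phi''(m^*)\,x^2+O\bigl(x^3/\sqrt N\bigr)
\]
uniformly on $|x|\le R$. Since the rescaled lattice has mesh $2/\sqrt N$ and the prefactor tends to $(1-(m^*)^2)^{-1/2}$, a Riemann sum approximation together with dominated convergence (using the large-deviation bound to discard $|x|>R$) turns both the numerator and denominator of $\mu_N\bigl(N^{1/2}(m_N-m^*)\in\cdot\bigr)$ into Gaussian integrals with variance $-1/\phi''(m^*)$, yielding \eqref{asymptotic_uniq}.

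For Part~2, I would apply the same local analysis inside each neighborhood $A_i$ of $m_i$. Because $(K,J)\in\gamma$ forces $\phi(m_0)=\phi(m_1)$, the factor $e^{N\phi(m_i)}$ cancels in the ratio $\mu_N(m_N\in A_i)/\sum_j\mu_N(m_N\in A_j)$, and what is left of the Gaussian prefactors matches exactly the weights $\rho_i$ of \eqref{rho_i mi}, giving \eqref{lln_mi}. The conditional central limit statement \eqref{asymptotic_mi} is then Part~1 replayed inside $A_i$.

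For Part~3, at $(K_c,J_c)=(0,1)$ one checks $\phi''(0)=\phi'''(0)=0$ and $\phi^{(4)}(0)=-2$, which forces the anomalous scaling $y=N^{1/4}m$. Taylor expansion gives $N\phi(y/N^{1/4})=N\phi(0)-y^4/12+O(y^6/\sqrt N)$, the prefactor $(1-m^2)^{-1/2}$ tends to $1$, and tightness follows from an estimate of the form $\phi(m)-\phi(0)\le -c\,m^4$ on a neighborhood of $0$ together with the large-deviation bound outside; the resulting Riemann sum converges to \eqref{asymptotic_cpt}. The hard part throughout, and the reason the Hubbard-Stratonovich route fails in this cubic setting, is the uniform control of the Stirling and Taylor remainders across the scaling window $|m-m^*|\le N^{-\alpha}$ ($\alpha=\tfrac12$ off-critical, $\alpha=\tfrac14$ critical) and its smooth matching with the exponential tail bound. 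This is most delicate at the critical point, where the quadratic coefficient vanishes and the decay must be driven by the negative quartic term; it is here that the fine partition-function asymptotics in the spirit of \cite{Mukherjee2021} alluded to in the introduction become essential.
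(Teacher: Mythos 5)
Your proposal is sound and rests on the same analytic engine as the paper's Appendix \ref{appendix:B}: bound the multiplicity $A_N(x)$ via Stirling (the paper uses the Talagrand-type bound \eqref{talagrand} plus the refined expansion \eqref{bionomial coef muniq}), reduce the law of $m_N$ to a discrete Laplace sum in $e^{N\phi(m)}$ with prefactor $(1-m^2)^{-1/2}$, kill the complement of a shrinking window by the concentration estimate, and Taylor-expand $\phi$ at the maximizer. Where you diverge is in execution. For the CLT statements the paper does not pass the rescaled law through a Riemann sum directly; it proves pointwise convergence of the moment generating function, writing $\mathbb{E}[e^{t\sqrt{N}(m_N-m^*)}]=e^{-t\sqrt{N}m^*}\bar{Z}_N(t)/Z_N$ with a tilted Hamiltonian $\bar H_N=H_N+\sqrt{N}\,t\,m_N$, and then expands both partition functions via Lemma \ref{Appendix: lemma ZN exp unique}, tracking the tilted maximizer $m_N^*(t)$ to order $N^{-1/2}$. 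Your route avoids that bookkeeping at the cost of needing uniform control of the Stirling and Taylor remainders across the whole window, which you correctly identify as the delicate point; the two are equivalent in difficulty and both deliver \eqref{asymptotic_uniq}, \eqref{lln_mi} and \eqref{asymptotic_mi}, with the weights $\rho_i$ emerging in either case from the prefactors $[(m_i^2-1)\phi''(m_i)]^{-1/2}$ in \eqref{Z mi}. The more substantive difference is Part~3: since $K_c=0$, the Hamiltonian at $(K_c,J_c)=(0,1)$ is \emph{exactly} the critical Curie--Weiss Hamiltonian for every $N$, so the paper simply invokes Ellis--Newman for \eqref{lln_mc} and \eqref{asymptotic_cpt} and proves nothing new there. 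Your direct quartic Laplace analysis (with the window $|m|\le N^{-1/4+\alpha}$, the expansion $N\phi(y N^{-1/4})=N\phi(0)-y^4/12+O(y^6 N^{-1/2})$, and the check $\phi^{(4)}(0)=-2$) is correct and self-contained, and would be the argument one needs if a degenerate maximum occurred at some $K>0$ (Proposition \ref{regularity}(b) shows it does not), but for the theorem as stated it re-proves a classical result that the paper outsources.
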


Finally, we study the behavior of the limiting value of the magnetization  near the critical point $(K_c,J_c)=(0,1)$ namely the critical exponents of the model. The average value of the magnetization is  given by the LLN in Theorem \ref{asymptotic_mag} and will be denoted by $m^*(K,J)$.
The following proposition describes the critical behavior of  $m^*(K,J)$  when  $(K,J)\to (K_c,J_c)$ from various directions.

\begin{proposition}\label{critical exp}
Let $m^*(K,J)$ be the unique maximizer of $\phi(m)$ defined in Corollary \ref{remark31}. Given $\alpha\in\mathbb{R}$ consider the lines 
\begin{equation}\label{J(K) curve}
J(K)=1+\alpha K\,\,,\,\,K>0
\end{equation}
and the function 
$m^*(K)\equiv m^*(K,J(K))$. Then, for $K\to 0^+$, the following holds

\begin{equation}
 m^*(K)\sim 
 \begin{cases}
 \sqrt{3\alpha} \sqrt{K}, & \text{for} \quad \alpha>0\\
 3K,& \text{for} \quad  \alpha=0 \\
 0 , &  \text{for} \quad \alpha<0.
 \end{cases}
 \end{equation}

\end{proposition}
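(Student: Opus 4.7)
The plan is to work with the Taylor expansion of $\phi$ at $m=0$ and track the critical points asymptotically as $K\to 0^{+}$ along $J(K)=1+\alpha K$. Using $I(0)=-\log 2$, $I''(0)=1$, $I'''(0)=0$ and $I^{(4)}(0)=2$, I obtain
\begin{equation*}
\phi(m)=\log 2+\frac{\alpha K}{2}\,m^{2}+\frac{K}{3}\,m^{3}-\frac{m^{4}}{12}+O(m^{6}),
\end{equation*}
and therefore
\begin{equation*}
\phi'(m)=\alpha K\, m+K\, m^{2}-\frac{m^{3}}{3}+O(m^{5}).
\end{equation*}
Critical points of $\phi$ distinct from $m=0$ solve $m^{2}/3-Km-\alpha K+O(m^{4})=0$, whose leading-order roots are $m_{\pm}\approx (3K\pm\sqrt{9K^{2}+12\alpha K})/2$. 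These leading asymptotics are made rigorous by rescaling $m=\sqrt{K}\,\nu$ (for $\alpha>0$) or $m=K\nu$ (for $\alpha=0$) and applying the implicit function theorem to the resulting smooth equation in the rescaled variable.

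I would then treat the three cases separately. For $\alpha>0$, the discriminant is positive and the roots are $m_{\pm}\sim\pm\sqrt{3\alpha K}$; since $\phi''(0)=\alpha K>0$, the origin is a local minimum while both $m_{\pm}$ are local maxima, and the cubic breaks the symmetry via $\phi(m_{+})-\phi(m_{-})\sim 2\sqrt{3}\,\alpha^{3/2}K^{5/2}>0$, so $m^{*}(K)=m_{+}\sim\sqrt{3\alpha}\,\sqrt{K}$. For $\alpha=0$, the reduced equation $Km-m^{2}/3+O(m^{4})=0$ yields the non-zero root $m\sim 3K$; moreover $\phi''(0)=0$ and $\phi'''(0)=2K>0$, so $m=0$ is an inflection point rather than a maximum, hence $m^{*}(K)\sim 3K$. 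For $\alpha<0$, the discriminant $9K^{2}+12\alpha K$ is negative for $K<-4\alpha/3$, so no nontrivial critical point of $\phi$ exists near $0$, while $\phi''(0)=\alpha K<0$ makes $m=0$ a local maximum.

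The hard part will be the last case: ruling out a \emph{distant} competing maximizer for $\alpha<0$, since the Taylor expansion controls $\phi$ only in a neighborhood of $0$. I would close this gap with a localization argument on the consistency equation \eqref{mean-field eqn}. For fixed $\epsilon>0$, the function $g(m,K):=\tanh(Km^{2}+(1+\alpha K)m)-m$ is continuous in $(m,K)$ and satisfies $g(m,0)=\tanh(m)-m$, which vanishes only at $m=0$; by compactness of $\{m\in[-1,1]:|m|\geq\epsilon\}$, one has $g(m,K)\neq 0$ on this set for all sufficiently small $K>0$. Hence every critical point of $\phi$ lies in $(-\epsilon,\epsilon)$, and choosing $\epsilon$ small enough that the local analysis applies shows that the only such point is $m=0$, giving $m^{*}(K)=0$.
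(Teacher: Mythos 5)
Your proposal is correct, and it reaches the asymptotics by a route that differs from the paper's in two respects. For $\alpha\geq 0$ the difference is cosmetic: you expand $\phi'$ at $m=0$ while the paper Taylor-expands the consistency equation $m=\tanh(Km^2+Jm)$, and both reduce to the same quadratic $m^2-3Km-3\alpha K\approx 0$ with roots $\tfrac{3}{2}\bigl(K\pm\sqrt{K^2+\tfrac{4}{3}\alpha K}\bigr)$; the paper then simply invokes its Corollary \ref{remark31} (for $J\geq J_c$ the global maximizer is the unique positive solution $m_2$), whereas you re-derive the selection of the positive root by classifying the three nearby critical points and checking $\phi(m_+)-\phi(m_-)\sim 2\sqrt{3}\,\alpha^{3/2}K^{5/2}>0$ — a computation the paper replaces by its earlier observation that $K>0$ forces $\phi(\bar m)>\phi(-\bar m)$ for $\bar m>0$. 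The genuinely different step is $\alpha<0$: the paper deduces $m^*(K)=0$ from the regularity of the coexistence curve, namely $\gamma(K)=1+o(K)$ (since $\overline{\gamma}'(K_c)=0$ by Proposition \ref{regularity_wall}), so that $J(K)=1+\alpha K$ falls below $\gamma(K)$ and Corollary \ref{remark31} gives $m^*=m_0=0$; you instead argue self-containedly that the discriminant $9K^2+12\alpha K$ is negative for small $K$, so no nonzero critical point exists near the origin, and you close the global issue with a compactness/uniform-convergence localization of all solutions of the consistency equation into $(-\epsilon,\epsilon)$. Your localization lemma is correct ($\tanh(m)-m$ vanishes only at $m=0$) and is in fact implicitly needed in all three cases to know the maximizer tends to $0$; the paper absorbs this into its phase-diagram propositions. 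Your version is more elementary and arguably more rigorous at the level of the asymptotic expansion (the rescaling $m=\sqrt{K}\,\nu$, resp. $m=K\nu$, plus the implicit function theorem makes precise what the paper does by ``neglecting higher order corrections''), at the cost of redoing locally some structure the paper has already established globally.
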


\begin{remark} Notice that when $\alpha<0$ the critical exponent is 0.
The case  $K=0$ and $J\to 1^+$ corresponds to the classical Curie-Weiss model and is well known that  
\begin{equation}
    m^*(0,J)\sim \sqrt{\frac{3(J-1)}{J^3}}. 
\end{equation}
\end{remark}
 
\section{Proofs}\label{sec3:proofs}
This section contains the proofs of the above results and  is organised as follows:\\

In Section \ref{sec2},  we prove Proposition \ref{prop uniq} by studying the  properties of the function $\phi(m)$ appearing in the variational problem \eqref{varpressure}.
Section \ref{proof theorem} contains the proof of Theorem \ref{asymptotic_mag}   and is based on the asymptotic expansion given in Appendix \ref{appendix:B}.
Finally, in Section \ref{proof critical exp}, we derive the critical exponents of the model.

\subsection{Proof of Proposition \ref{prop uniq}}
\label{sec2}
The complete proof of Proposition \ref{prop uniq} follows from Propositions \ref{classification_m}, \ref{regularity}, \ref{wall} and \ref{regularity_wall} below. 

Let us start studying   in detail  the variational principle \eqref{varpressure} and observe that the function  $\phi(m)$ 
satisfies
\begin{equation}\label{derivative_VP}
\begin{split}
\frac{\partial }{\partial m}\phi(m)  &= Km^2 + Jm  - \frac{1}{2}\log\left(\frac{1+m}{1-m}\right),\cr
\frac{\partial^2 }{\partial m^2}\phi(m)    &= 2Km +J-\frac{1}{1-m^2}.
\end{split}
\end{equation} 
Therefore the  variational pressure $\phi(m)$ attains it maximum in at least one point $m=m(K,J) \in (-1,1)$, which satisfy

\begin{equation}\label{consistency eqn}
    \frac{\partial}{\partial m}\phi(m)=0, \qquad \text{i.e.,} \quad m=\tanh{(Km^2 + Jm)}.
\end{equation}
Indeed, from \eqref{derivative_VP} $\lim_{m\rightarrow-1^+} \phi'(m) = +\infty$ and  $\lim_{m\rightarrow1^-} \phi'(m) = -\infty$. Therefore, there exists $\epsilon>0$ such that $\phi(m)$ is strictly increasing on $[-1,-1+\epsilon]$ and strictly decreasing on $[1-\epsilon,1]$. This implies that, the local maximizers of $\phi(m)$ does not include $-1$ and $+1$. Notice also that, since $K>0$, if $\bar{m}>0$ then $\phi(\bar{m})>\phi(-\bar{m})$ therefore the supremum of $\phi(m)$ cannot be reached at negative values. 

A complete classification of the critical points of $\phi(m)$ is contained in the following proposition:

\begin{proposition}\label{classification_m}
(Classification of critical points) For all $K>0$ and $J\in\mathbb{R}$, the solutions to equation \eqref{consistency eqn} can be described as follow:\\

   Define the function
\begin{equation}\label{psik}
    \Psi(K) := \min_{m\in[0,1]} \frac{g(m,K)}{m} < 1 
\end{equation}
where $g(m,K):=\mathrm{arctanh}(m)-Km^2$ and set $J_c=1$. Then:
    \begin{itemize}
        \item[a.] for $J<\Psi(K)$, there exists a unique solution, $m_0 = 0$, and it is the maximum point of $\phi(m)$,
        \item[b.] for $\Psi(K)<J<J_c$, equation \eqref{consistency eqn} has three solutions, i.e., $m_0, m_1>m_3>0$. Furthermore, $m_0, m_1$ are local maximum points while $m_3$ is a local minimum point of $\phi(m)$, 
        \item[c.] for $J=\Psi(K)$, there exist two solutions, $m_0$ and $m_1>0$. Where $m_0$ is the maximum point of $\phi(m)$ and $m_1$ is an inflection point. 
        \item[d.] If $J\geq J_c$, there exists a unique positive solution $m_2$ which is the only maximum point of $\phi(m)$ in equation \eqref{varpressure}.
    \end{itemize}
\end{proposition}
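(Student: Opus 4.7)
The plan is to convert the consistency equation $m = \tanh(Km^2 + Jm)$ into a level-set statement for a single auxiliary function and read off the classification from its graph. Since $m = 0$ is always a solution and the supremum of $\phi$ cannot be attained at negative $m$ (by the remark preceding the statement), I focus on positive roots. For $m \in (0,1)$, applying $\arctanh$ and dividing by $m$ rewrites the equation as $F(m) = J$, where
\begin{equation*}
F(m) \;:=\; \frac{\arctanh(m)}{m} - Km .
\end{equation*}
Two identities drive the proof. The factorization
\begin{equation*}
\phi'(m) \;=\; -\,m\,\bigl(F(m) - J\bigr), \qquad m \in (0,1),
\end{equation*}
reduces the sign of $\phi'$ to the sign of $F-J$; and substituting $\arctanh(m) = Km^2 + Jm$ into $\phi''(m) = 2Km + J - 1/(1-m^2)$ yields, at any positive critical point,
\begin{equation*}
\phi''(m) \;=\; -\,m\, F'(m),
\end{equation*}
so the second derivative test for $\phi$ is governed by the monotonicity of $F$.

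Next I analyze $F$ on $(0,1)$. The Taylor expansion
\begin{equation*}
h(m) \;:=\; \frac{\arctanh(m)}{m} \;=\; \sum_{k \geq 0} \frac{m^{2k}}{2k+1}
\end{equation*}
has strictly positive even coefficients, so $h$, and therefore $F = h - Km$, is strictly convex on $(0,1)$. Combined with the boundary behavior $F(0^+) = 1$, $F(1^-) = +\infty$, and $F'(0^+) = -K < 0$, this gives a unique critical point $\widetilde m(K) \in (0,1)$, which is a strict global minimum. Set $\Psi(K) := F(\widetilde m(K))$; then $\Psi(K) < 1$, and $F$ strictly decreases from $1$ to $\Psi(K)$ on $(0,\widetilde m(K)]$ and strictly increases to $+\infty$ on $[\widetilde m(K),1)$.

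The four cases of the proposition now follow from counting intersections of the graph of $F$ with the horizontal line $\{J\}$, together with $\phi''(0) = J - 1$ and the two identities above. For $J < \Psi(K)$: no positive root, $\phi'(m) < 0$ on $(0,1)$, and $\phi''(0) < 0$ makes $m_0 = 0$ the unique maximizer. For $J = \Psi(K)$: the single positive root is $m_1 = \widetilde m(K)$, where $\phi'$ is nonpositive on $(0,1)$ vanishing only at $m_1$, so $m_1$ is a horizontal-tangent inflection and $m_0$ remains a local max. For $\Psi(K) < J < 1$: two positive roots $m_3 < \widetilde m(K) < m_1$ with $F'(m_3) < 0$ and $F'(m_1) > 0$, classified by $\phi'' = -m F'$ as a local min and a local max respectively, while $m_0 = 0$ is still a local max. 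For $J \geq 1 = J_c$: a single positive root $m_2 > \widetilde m(K)$ with $F'(m_2) > 0$, giving a local max, while $\phi''(0) \geq 0$ rules out $m_0 = 0$ as a local max. The absence of negative roots in cases (a)--(c) comes from the same analysis applied to the companion function $F_-(n) := h(n) + Kn$ for $n \in (0,1)$, which is strictly increasing from $1$ to $+\infty$ and so never hits a level $J \leq 1$.

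The only real work is the strict convexity of $F$ on $(0,1)$, which pins down $\widetilde m(K)$ and the value $\Psi(K)$; this comes for free from the positive Taylor coefficients of $h$. Everything else is a short diagram chase through the identities $\phi'(m) = -m(F(m)-J)$ and $\phi''(m) = -m F'(m)$ at positive critical points.
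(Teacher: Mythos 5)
Your proof is correct and reaches the same classification, but it runs on a different engine than the paper's. The paper keeps the consistency equation in the form $Jm=g(m,K)$ with $g(m,K)=\arctanh(m)-Km^2$ and counts intersections of the curve $g$ with the line $Jm$, using the fact that $g$ has exactly one inflection point; the max/min/inflection labels are then assigned by ad hoc arguments (``it's easy to check'', ``standard reasoning''). You instead divide by $m$ and study $F(m)=\arctanh(m)/m-Km$ against horizontal lines, and the whole argument is carried by two observations the paper does not make explicit: the strict convexity of $F$ on $(0,1)$ (from the positive Taylor coefficients of $\arctanh(m)/m$), which pins down the unimodal shape, the unique minimizer $\widetilde m(K)$, the value $\Psi(K)=F(\widetilde m)<1$, and the exact intersection counts with no case analysis; and the identities $\phi'(m)=-m\,(F(m)-J)$ and $\phi''(m)=-m\,F'(m)$ at positive critical points, which turn the second-derivative classification into reading off the sign of $F'$ on the two monotone branches. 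This buys a fully rigorous version of the steps the paper waves at, and your companion function $F_-(n)=\arctanh(n)/n+Kn$ cleanly disposes of negative roots for $J<J_c$, which the paper leaves implicit. The one spot where you are slightly too quick is case (d) at $J=J_c=1$: there $\phi''(0)=J-1=0$, and ``$\phi''(0)\geq 0$'' alone does not rule out a local maximum at the origin. Your own identity repairs this in one line: since $F(m)<F(0^+)=1=J$ on $(0,\widetilde m\,]$ and indeed $F<J$ on all of $(0,m_2)$, you get $\phi'(m)=-m(F(m)-J)>0$ there, so $\phi$ is strictly increasing to the right of $0$ and $\phi(m_2)>\phi(0)$; combined with the paper's remark that the supremum cannot sit at negative $m$, this makes $m_2$ the unique global maximizer. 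With that sentence added, the argument is complete.
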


\begin{proof}
Let us start by noticing that $m=0$ is always a solution of \eqref{consistency eqn}. Moreover, 

\begin{equation*}
    \phi''(0)
    \begin{cases}
      < 0, & \text{if $J<1$}\\
      > 0, & \text{if $J>1$}.
    \end{cases}       
\end{equation*}
Now, let's rewrite \eqref{consistency eqn} as

\begin{equation}\label{consistency_g}
    m J=\underbrace{\Bigg[\text{arctanh}(m)-Km^2\Bigg]}_{=g(m,K)}.
\end{equation}

The solutions of \eqref{consistency eqn} are the intersections between the line $mJ$  and the function  $g(m,K)$. Therefore the function $\Psi(K)$ in \eqref{psik} is a benchmark to study the number of  solutions of $\phi'(m)=0$ when $J$ varies. Indeed by definition, $\Psi(K)$ represents the smallest value of $J$ in order to have a positive solution for \eqref{consistency_g}. Let us start collecting some properties of the function 
$g(m,K)$. By definition we have  that 
\begin{equation}
\begin{split}
    g'(m,K) &= \Bigg[\frac{1}{1-m^2}-2Km\Bigg]\cr
    g''(m,K)&= \Bigg[\frac{2m}{(1-m^2)^2}-2K\Bigg].
\end{split}
\end{equation}
This implies that,

\begin{equation*}
   \begin{cases}
   g'(0,K)=1, \\
  g''(0,K)=-2K<0 \quad \text{for all  $K>0$}.
   \end{cases}
\end{equation*}
Since the function $m \mapsto \dfrac{2m}{(1-m^2)^2}$  is strictly increasing on $[0,1)$, then  $g''(m,K)=0$ has only one solution, namely   $g(m,K)$ has only one inflection point. Moreover, observe that, as $m\rightarrow 1^-$, $g(m,K) \rightarrow +\infty$.

\begin{enumerate}
    \item[a.] If $J<\Psi(K)$ then it's clear that \eqref{consistency eqn}  has a unique solution $m_0=0$ which is a maximum point since in this case $\phi''(0)<0$ .
    
    \item[b.] If $\Psi(K)<J<J_c$, continuity of $g$ and the fact that for $m\rightarrow 1^-$, $g(m,K) \rightarrow +\infty$, imply that \eqref{consistency eqn} has three solutions, $m_0, m_1$ and $m_3$, where $m_1$ and $m_3$ are positive. It's also easy to check using the properties of the function $g(m,K)$ that $m_0$ and $m_1$ are local maxima while $m_3$ is a local minima.
    
    \item[c.] If $J=\Psi(K)$, then there is only one intersection point $m_4$ between the line $mJ$ and the function $g(m,K)$. Standard reasoning  allows to conclude that  $m_4$  is an inflection point for $\phi$. 
    
    \item[d.] Finally suppose that $J\geq J_c$.  The fact that $ g'(0,K)=1$ and  $g''(0,K)=-2K<0$ for $K>0$, means that the line $mJ$ starts above the function $g$. Now, since $g$ has at most one inflection point and $g(m,K) \rightarrow +\infty$ as $m\to 1^-$, one can conclude that there exist a unique positive solution $m_2\in (0,1)$ of $\phi'(m)=0$. 
\end{enumerate}

\end{proof}

The solutions made mention in Proposition \ref{classification_m} are displayed in Figure \ref{gsolutions}.

\begin{figure}[ht]
    \centering
    \includegraphics[width=12cm]{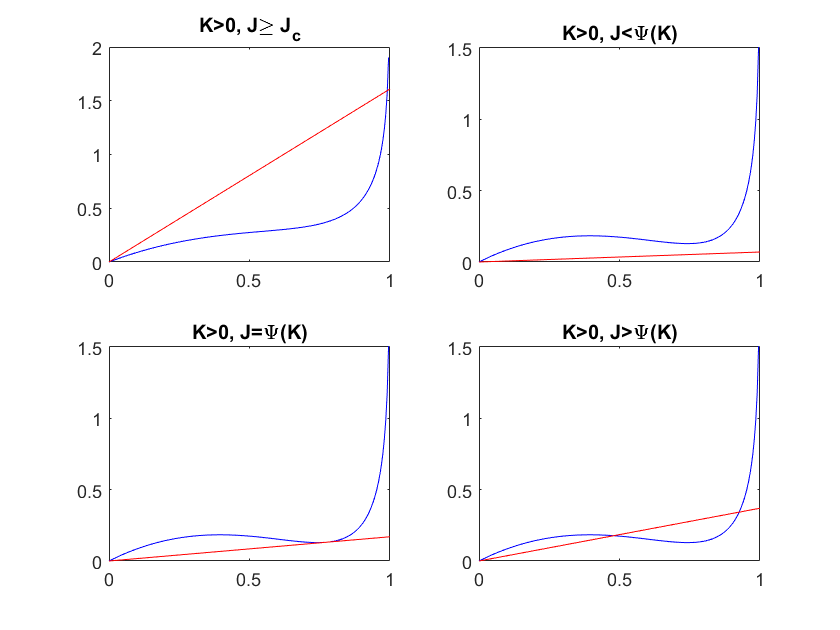}
    \caption{The points of intersection between the blue curve $g(m,K)$ as defined in \eqref{consistency_g} and red curve $f(m)=Jm$. The solution of the  equation \eqref{consistency eqn} are the points of intersection between $g(m,K)$ and $f(m)$.}
    \label{gsolutions}
\end{figure}

In the next proposition we obtain the differentiability of the solution(s) of the consistency equation \eqref{consistency eqn}  with respect to the parameters $J \; \text{and}\; K$.   

\begin{proposition}\label{regularity} (Regularity properties). Let  $m_0,m_1$ and $m_2$ be the (local) maxima  of $\phi$ described in Proposition \ref{classification_m}. Then for $K>0$, the following properties hold:

\begin{itemize}
    \item[(a)] $m_1$ is continuous in its domain namely $\Psi(K)\leq J<J_c$ and $C^\infty$ in its interior, while $m_2$ is  $C^\infty$ in  its domain, namely $J\geq J_c$.
   
   \item[(b)]  
   $\phi''(m_0)=\phi''(0)<0$, 
   $\phi''(m_1)<0$ for $\Psi(K)< J<J_c$, and
     $\phi''(m_2)<0$  for $J\geq J_c$.
\end{itemize}
Moreover, for any $i\in\{0,1,2\}$ it holds that

\begin{equation}
 \frac{\partial}{\partial J} \phi(m_i)=\frac{1}{2}m_i^2, \qquad \frac{\partial}{\partial K} \phi(m_i)=\frac{1}{3}m_i^3
\end{equation}

\begin{equation}
 \frac{\partial m_i}{\partial J} =-\frac{m_i}{\phi''(m_i)}, \qquad \frac{\partial m_i}{\partial K} =  -\frac{m_i^2}{\phi''(m_i)} .
\end{equation}
\end{proposition}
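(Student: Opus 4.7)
The plan is to view the consistency equation $\phi'(m;K,J) = Km^2 + Jm - \arctanh(m) = 0$ as an implicit equation $F(m;K,J) = 0$ and apply the implicit function theorem on the set where $\partial_m F = \phi''(m)$ is nonzero. Once smoothness is in hand, both the envelope formulas for $\phi(m_i)$ and the chain-rule formulas for $m_i$ follow mechanically.

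First I would establish the strict concavity $\phi''(m_i)<0$ claimed in (b). For $m_0=0$ the formula in \eqref{derivative_VP} gives $\phi''(0)=J-1$, which is negative whenever $J<J_c=1$, i.e., in the regimes where $m_0$ is a local maximum. For $m_1$ and $m_2$ I would use the geometric reformulation of \eqref{consistency eqn} exploited in Proposition \ref{classification_m}: these roots correspond to intersections of the line $y=Jm$ with the curve $y=g(m,K)$, and at each of these local maxima of $\phi$ the curve $g$ crosses the line from below (since $g(m,K)\to +\infty$ as $m\to 1^-$ and no further intersection lies beyond). This yields $g'(m_i,K)>J$, which is exactly $\phi''(m_i) = J + 2Km_i - (1-m_i^2)^{-1} = J - g'(m_i,K) < 0$.

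With $\phi''(m_i)\neq 0$ established, the implicit function theorem applied to $F(m;K,J)=0$ gives a unique $C^\infty$ local branch $m_i(K,J)$ through each such point. Since $m_0\equiv 0$ trivially, $m_1$ and $m_2$ are then $C^\infty$ on the open sets described in (a). For the remaining continuity of $m_1$ up to the boundary $J=\Psi(K)$, I would argue that as $J\to \Psi(K)^+$ the two positive roots $m_3<m_1$ of Proposition \ref{classification_m}(b) must coalesce (otherwise one could pick subsequential limits and obtain a third positive root in the degenerate case (c), contradicting the uniqueness statement there), and the common limit equals the boundary value $m_4$ of case (c); this gives continuity of $m_1$ at the left endpoint of its domain.

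Finally, differentiating the identity $F(m_i(K,J);K,J)=0$ with respect to $J$ and $K$ on the smooth part of the domain and using $\partial_J F(m;K,J)=m$, $\partial_K F(m;K,J)=m^2$ yields
\begin{equation*}
\frac{\partial m_i}{\partial J}=-\frac{m_i}{\phi''(m_i)},\qquad \frac{\partial m_i}{\partial K}=-\frac{m_i^2}{\phi''(m_i)},
\end{equation*}
which are well defined by (b). For the derivatives of $\phi(m_i)$ I would invoke the envelope principle: the full derivative
\begin{equation*}
\frac{d}{dJ}\phi(m_i(K,J);K,J)=\phi'(m_i)\,\frac{\partial m_i}{\partial J}+\frac{\partial \phi}{\partial J}(m_i;K,J)
\end{equation*}
loses its first term because $\phi'(m_i)=0$, leaving $\partial_J\phi(m_i;K,J)=\tfrac12 m_i^2$; the analogous identity in $K$ gives $\tfrac13 m_i^3$. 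The only delicate point in this whole program is the strict sign of $\phi''(m_i)$ and the continuity at $J=\Psi(K)$; everything else is a routine IFT/envelope computation.
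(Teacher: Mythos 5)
Your overall architecture (implicit function theorem for smoothness, implicit differentiation and the envelope identity for the derivative formulas) is the same as the paper's, and those parts are correct. The genuine gap is in your proof of (b), which is the substantive content of the proposition. You argue that at $m_1$ and $m_2$ the curve $y=g(m,K)$ crosses the line $y=Jm$ from below and conclude $g'(m_i,K)>J$. But a sign-changing crossing only gives $g'(m_i,K)\geq J$, i.e.\ $\phi''(m_i)\leq 0$: a degenerate, inflection-type crossing (as at the tangency point $m_4$ in case (c) of Proposition \ref{classification_m}, where $g$ touches the line, or more generally a cubic-type sign change with $g'(m_i)=J$) is not excluded by the crossing direction alone. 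Ruling this out is exactly what the paper spends most of its proof on: it studies $G(m):=\phi''(m)=J-g'(m,K)$, shows $G'$ is strictly decreasing with $G'(0)=2K>0$ so that $G$ has at most two zeros and a single interior maximum, and then derives a contradiction with the existence of the intermediate root $m_3$ (for $m_1$) or with global maximality (for $m_2$) if $G(m_i)=0$. Without some such global argument your claim ``this yields $g'(m_i,K)>J$'' is an assertion, not a proof.

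A secondary, smaller issue concerns continuity of $m_1$ at the boundary $J=\Psi(K)$. Your coalescence argument needs the subsequential limits of $m_1$ to be \emph{positive} roots of the limiting equation; otherwise the limit could a priori be $m_0=0$, which is also a root at $J=\Psi(K)$ and would not contradict the uniqueness of the positive solution in case (c). The paper closes this by proving the uniform lower bound $m_1\geq\tau(K,J)=\bigl(\tfrac{1}{J}-1\bigr)\tfrac{J}{K}>0$ and then invoking Berge's maximum theorem on $\phi$ restricted to $[\tau(K,J),1]$; you would need to add an analogous lower bound before your compactness argument identifies the limit with $m_4$.
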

\begin{remark}
Notice that $(b)$ implies that there are no degenerate maximum points of $\phi(m)$ for $K>0$. Therefore the only degenerate maximum is  obtained for $(K,J)=(K_c,J_c)=(0,1)$, that is the critical point of a Curie-Weiss model, here the magnetization takes the value $m_c=0$.
\end{remark}

\begin{proof}
\textit{(a)} Let's start  with $m_1$ 
and  take  $(K,J)$ in its  domain, namely  $D:=\{(K,J)|K>0, \Psi(K)\leq J<J_c\}$  . We define  $\tau(K,J)= \bigg(\dfrac{1}{J}-1\bigg)\dfrac{J}{K}>0$ and $\tilde{\phi}(m):=\phi(m)|_{[\tau(K,J),1]}$. Observe from \eqref{consistency eqn} that,
    
\begin{equation*}
\begin{split}
    m_1=&\frac{1}{J}\Bigg[\underbrace{\text{arctanh}(m_1)}_{\geq m_1}-Km_1^2\Bigg]\cr
 \Longrightarrow m_1   \geq&\bigg(\frac{1}{J}-1\bigg)\frac{J}{K}=\tau(K,J).
\end{split}
\end{equation*}
Hence,  $m_1$ is the unique maximum point of  $\tilde{\phi}(m)$, then  by the Berge's maximum theorem \ref{prop B1} (see \cite{Ok2007, AlbericiContMing2014}), $m_1$ is continuous for  $(K,J)\in D$. To prove the smoothness of $m_1$ on the interior of its domain it's enough to show that $\phi''(m_1)<0$ and then apply the implicit function theorem \ref{prop B2} (see \cite{AlbericiContMing2014, Rudin1976}). Let $G(m):=\phi\,''(m)$ then,  

\begin{equation*}
\begin{split}
    \frac{\partial G}{\partial m}(m) &= 2K-\frac{2m}{(1-m^2)^2}\cr
    \frac{\partial^2G}{\partial m^2}(m) &= - \frac{2(3m^2+1)}{(1-m^2)^3}<0 \quad \forall \; m \in [0,1)
\end{split}
\end{equation*}
and hence,

\begin{equation}\label{propG}
   \begin{cases}
   G(0)=J-1<0, \quad \forall  \; J<J_c\\
  G\,'(0)= 2K > 0,  \quad \forall \; K>0\\
  G\,''(0)= -2 \\
  \lim_{m\rightarrow1^-} G(m) = -\infty \quad \forall \; K>0 \; \text{and} \; J<J_c  .
   \end{cases}
\end{equation}
We want to prove that $G(m_1)<0$ if $\Psi(K)<J<J_c$. Clearly since $m_1$ is a local maximizer it's enough to show that $G(m_1)\neq0$.
Recall that $m_1$ is the biggest positive solution of $\phi'(m)=0$. It's easy to check  that $G(m)=0$ has at most two solutions. Assume by contradiction that $G(m_1)=0$ if $\Psi(K)<J<J_c$, then  $G(m)<0$ or $G(m)>0$   in a left neighbourhood of $m_1$.

\begin{itemize}
\item Suppose that $G(m)<0$ in a left neighbourhood of $m_1$ then $G(m)$ cannot  be always negative, otherwise $\phi'(m)$ is decreasing and, since $\phi'(0)=0$ then $\phi'(m)=0$ can not have more than one solution. This contradicts point $b)$ of Proposition \ref{classification_m}. Therefore  there exist an interval where $G(m)>0$ but keeping in mind the properties of $G$ in \eqref{propG} and the fact that $G$ is continuous, this implies that there are at least  three solutions for $G(m)=0$, but this is impossible because we already observed that $G(m)=0$  has at most two solutions. 

\item Suppose that $G(m)>0$ in a left neighbourhood of $m_1$, then  $G(m)=0$ has in addition to $m_1$ another solution  that we denote by $\bar{m}$. Clearly  $\bar{m}<m_3$ otherwise $m_3$ cannot satisfies   $\phi'(m_3)=0$. Therefore $G(m)\equiv \phi''(m)>0$ if $m_3<m<m_1$ and this contradicts the fact that $\phi'(m_3)=\phi'(m_1)=0$.
\end{itemize}

Let's focus on $m_2$. Since for $K>0$ and $J\geq J_c$, $m_2$ is the only maximizer of $\phi(m)$ it's enough to show that $\phi''(m_2)<0$
to get smoothness of $m_2$ by using the implicit function theorem.  Let's note that if $J\geq J_c$ then  $\phi''(0)\geq0$  and   $\phi''(m)=0$ has a unique positive solution. Furthermore, $\phi(m)$ has a unique maximum point, $m_2\in (0,1)$ and $\phi'(m_2)= 0$. It is easy to show that   $\phi''(m_2)\neq 0$ by contradiction. Let's assume that $\phi''(m_2)= 0$ then $\phi''(m)> 0$ for $m<m_2$, therefore,  using the Taylor's series expansion of $\phi(m)$ around $m_2$ one gets $\phi(m)>\phi(m_2)$ which contradicts the fact that $m_2$ is the global maximum. 

Therefore by the implicit function theorem \ref{prop B2}, since  $\phi''(m)\neq0$ on the interior of the domains of $m_1$ and $m_2$, we can conclude that $m_1$ and $m_2$ are $C^\infty$. 
\vspace{1cm}

\textit{(b)} We already proved that for any $i\in \{0,1,2\}$, $\phi''(m_i)<0$ for suitable $K,J$. For the second part a direct computation shows that:

\begin{equation}\label{regJ_mu}
\begin{split}
    \frac{\partial}{\partial J}\phi(m_i) &=
    \frac{\partial}{\partial m}\phi(m)\bigg|_{m=m_i}\frac{\partial m_i}{\partial J}+\frac{m_i^2}{2}\cr
    &=\frac{m_i^2}{2}
\end{split}
\end{equation}
and similarly, 

\begin{equation}\label{regK_mu}
\begin{split}
    \frac{\partial}{\partial K}\phi(m_i)&=
    \frac{\partial}{\partial m}\phi(m)\bigg|_{m=m_i}\frac{\partial m_i}{\partial K}+\frac{m_i^3}{3}\cr
    &=\frac{m_i^3}{3}. 
\end{split}
\end{equation}
Using the fact that $m_i, i=\{0, 1,2\}$ are the stationary points of  $\phi(\cdot)$, we have that $\dfrac{\partial m_i }{\partial K}$ satisfies

\begin{equation}
\begin{split}
     &\frac{1}{1-m_i^2} \frac{\partial m_i}{\partial K} - m_i^2 - 2Km_i\frac{\partial m_i}{\partial K} - J\frac{\partial m_i}{\partial K} = 0 \cr
    & \frac{\partial m_i}{\partial K} \Bigg[\frac{1}{1-m_i^2}-2Km_i -J\Bigg] =m_i^2\cr
    &\frac{\partial m_i}{\partial K} = -\frac{m_i^2}{\phi''(m_i)}
\end{split}
\end{equation}
and similarly for $\dfrac{\partial m_i }{\partial J}$ one obtains

\begin{equation}
\begin{split}
   & \frac{1}{1-m_i^2} \frac{\partial m_i}{\partial J} - 2Km_i\frac{\partial m_i}{\partial J} - m_i - J\frac{\partial m_i}{\partial J} = 0 \cr
     &\frac{\partial m_i}{\partial J} \Bigg[\frac{1}{1-m_i^2}-2Km_i -J\Bigg] = m_i\cr
    &\frac{\partial m_i}{\partial J} = -\frac{m_i}{\phi''(m_i)}
\end{split}
\end{equation}
and this concludes the proof.
\end{proof}

Now we study which of the stationary points described by Proposition \ref{classification_m} are global maximizers of $\phi(m)$ and show the existence of a phase transition. These stationary points are: $m_0, m_1$, and  $m_2$. Let us start by recalling the result of  Proposition \ref{classification_m}:   
\begin{itemize}
\item if $J< \Psi(K)$, then  $m_0$ is the only global maximum point of $\phi$

\item 
if $\Psi(K)<J<J_c$ then  $\phi(m)$ has two local maximizers $m_0$ and $m_1$ 

\item if $J\geq J_c$ then $m_2$ is the only the global maximum point of $\phi(m)$
\end{itemize}

To identify the coexistence of two global maximum points of $\phi(m)$ when $\Psi(K)<J<J_c$, consider the following function:
\begin{equation}\label{Delta}
    \Delta(K,J)=  \phi(m_1,K,J) - \phi(m_0, K,J).
\end{equation}
Notice that $\Delta(K,J)$ can be extended by continuity at $J=\Psi(K)$ and $J=J_c$. In the above equation we use $\phi(\cdot,K,J)$ to emphasis the dependence of $\phi$ on the parameters.

\begin{proposition}\label{wall}
(Existence and uniqueness). For all $K>0$ there exists a unique $J=\gamma(K)\in (\Psi(K), J_c)$ such that $\Delta(K,J) = 0$. Furthermore,

\begin{equation}
    \Delta(K,J)
   \begin{cases}
   < 0 , & \text{if \; $\Psi(K)\leq J < \gamma(K)$ }\\
   > 0 , & \text{if \; $ \gamma(K)<J\leq J_c$}.
   \end{cases}
\end{equation}
\end{proposition}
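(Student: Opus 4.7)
The strategy is to evaluate $\Delta(K,J)$ at the endpoints $J=\Psi(K)$ and $J=J_c$, establish its continuous extension and strict monotonicity in $J$ on $[\Psi(K),J_c]$, and conclude existence and uniqueness of $\gamma(K)$ by the intermediate value theorem together with the monotonicity. Throughout, $K>0$ is fixed and I will exploit the explicit derivative formulas for $\phi(m_i)$ provided by Proposition \ref{regularity}.

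First I would check the endpoint signs. At $J=\Psi(K)$, Proposition \ref{classification_m}(c) says that $m_0=0$ is the unique global maximum of $\phi$ while $m_1$ is an inflection point, so $\phi(m_1)<\phi(m_0)$ and hence $\Delta(K,\Psi(K))<0$. At $J=J_c$, by continuity of $m_1$ on its closed domain (Proposition \ref{regularity}(a)) and by the classification, $m_1(K,J)$ extends to a positive value $m_2(K,J_c)\in(0,1)$, which is the unique global maximum of $\phi(\cdot,K,J_c)$; since $m_0=0$ is no longer a maximum at $J_c$, one still has $\phi(m_2(K,J_c),K,J_c)>\phi(0,K,J_c)$, so $\Delta(K,J_c)>0$. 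Continuity of $J\mapsto\Delta(K,J)$ on $[\Psi(K),J_c]$ then follows from the continuity of $m_1$ and the smoothness of $\phi$ in its arguments, so the IVT yields at least one zero in $(\Psi(K),J_c)$.

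The key step for uniqueness and the sharp sign statement is to differentiate $\Delta$ in $J$ on the open interval $(\Psi(K),J_c)$. Proposition \ref{regularity} gives $\dfrac{\partial}{\partial J}\phi(m_i)=\dfrac{m_i^2}{2}$ for $i=0,1$, so
\begin{equation*}
\frac{\partial \Delta}{\partial J}(K,J)=\frac{m_1^2-m_0^2}{2}=\frac{m_1(K,J)^2}{2}>0,
\end{equation*}
since $m_1>0$ on the interior of its domain. Thus $\Delta(K,\cdot)$ is strictly increasing and continuous on $[\Psi(K),J_c]$, with a strictly negative value at the left endpoint and a strictly positive value at the right endpoint. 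This forces the zero $\gamma(K)$ to exist, to be unique, and to provide exactly the sign pattern claimed.

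The only delicate point I anticipate is the behaviour at the endpoint $J=J_c$, because the ``branch'' $m_1$ must be identified with the branch $m_2$ of Proposition \ref{classification_m}(d) in order to compare $\Delta$ against zero by continuity; once this identification is made via the continuity statement of Proposition \ref{regularity}(a) (noting that as $J\uparrow J_c$ the other positive critical point $m_3$ collapses into $0$ while $m_1$ stays bounded away from $0$), the rest of the argument is a direct application of IVT combined with the monotonicity formula above.
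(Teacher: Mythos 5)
Your proposal is correct and follows essentially the same route as the paper: endpoint sign evaluation ($\Delta<0$ at $J=\Psi(K)$ since $m_0$ is then the sole maximizer, $\Delta>0$ at $J=J_c$ since $m_1$ merges with the unique maximizer $m_2$), continuity of $\Delta$ via continuity of $m_1$, the intermediate value theorem for existence, and strict monotonicity from $\partial\Delta/\partial J=\tfrac{1}{2}m_1^2>0$ for uniqueness and the sign pattern. No gaps to report.
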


\begin{proof}
Let us start by observing that 
\begin{itemize}
    \item $\Delta(K,\Psi(K))<0$, since for $J=\Psi(K)$, $m_0$ is the only maximum point of $\phi(m,K,J)$.
    
    \item $\Delta(K,J_c)>0$, since $\lim_{J\rightarrow 1^-} m_1(K,J)=m_2(K,1)$ and  $m_2(K,1)$ is the only global maximum for $\phi(m,K,J)$.
    \end{itemize}
    
Now, by continuity of $\phi(m)$ and $ m_1$, we have that $J\mapsto \Delta(K,J)$ is a continuous function, and then the existence of the \textit{wall} $J=\gamma(K)$ follows  from the application of the intermediate value theorem. For the uniqueness part we observe that  $J\mapsto \Delta(K,J)$ is strictly increasing. Indeed  from  Proposition \ref{regularity} we know that  $\phi(m_1), m_1$ are smooth functions and

\begin{equation}
\begin{split}
    \frac{\partial \Delta}{\partial J}(K,J) &= \frac{\partial}{\partial J}\phi(m_1) - \frac{\partial}{\partial J}\phi(m_0)\cr
    &= \frac{1}{2}m_1{^2} - \frac{1}{2}m_0{^2} \cr
    &= \frac{1}{2}m_1{^2} >0
\end{split}
\end{equation}
for $J\in (\Psi(K),J_c)$. 
\end{proof}

\begin{corollary}\label{remark31}
The  function $\phi(m)$  has a unique global maximum point $m^*(K,J)$  given by:

\begin{equation}\label{Remark m^*}
    m^*(K,J) :=
   \begin{cases}
   m_0=0 , & \text{if \; $J<\gamma(K)$}\\
   m_1(K,J) , & \text{if\; $\gamma(K)<J<J_c$},\\
   m_2(K,J) , & \text{if \; $J\geq J_c,\;$ }\\
   \end{cases}
\end{equation}
where the function $\gamma(K)$ is defined by Proposition \ref{wall} and $\phi''(m^*)<0$. 
\end{corollary}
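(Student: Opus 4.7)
The plan is that the corollary is essentially a repackaging of the previous three results: Proposition \ref{classification_m} enumerates the critical points of $\phi$ in each parameter regime, Proposition \ref{wall} determines which of the competing local maxima is selected by comparison of pressure values, and Proposition \ref{regularity}(b) supplies the strict second-derivative inequality at each candidate. I would proceed by a clean case analysis on the value of $J$ relative to the thresholds $\Psi(K)<\gamma(K)<J_c$, noting that the strict ordering $\Psi(K)<\gamma(K)$ follows from Proposition \ref{wall} since $\Delta(K,\Psi(K))<0$.

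In the regime $J<\Psi(K)$, Proposition \ref{classification_m}(a) states that $m_0=0$ is the only critical point and is the maximum, so $m^*=m_0$. In the regime $\Psi(K)\le J<\gamma(K)$, Proposition \ref{classification_m}(b)--(c) produces two local maxima $m_0$ and $m_1$, and Proposition \ref{wall} gives $\Delta(K,J)<0$, so $\phi(m_0)>\phi(m_1)$ and again $m^*=m_0=0$. For $\gamma(K)<J<J_c$ the inequality flips, $\Delta(K,J)>0$, whence $m^*=m_1(K,J)$. For $J\ge J_c$, Proposition \ref{classification_m}(d) yields $m_2$ as the unique critical point in $(0,1)$; since $\phi''(0)=J-1\ge 0$ the origin is not a local maximum (at $J=J_c$ with $K>0$ one has $\phi'''(0)=2K\neq 0$, so $0$ is an inflection point), and hence $m^*=m_2(K,J)$. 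Uniqueness of the global maximizer off the coexistence curve is automatic: outside $\gamma$ the only pair of competing local maxima is $\{m_0,m_1\}$, and there $\Delta(K,J)\ne 0$ by Proposition \ref{wall}.

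For the strict inequality $\phi''(m^*)<0$, I would handle each case separately. If $m^*=m_0=0$, then because $J<\gamma(K)<J_c=1$ we have $\phi''(0)=J-1<0$ directly from \eqref{derivative_VP}. If $m^*=m_1$ or $m^*=m_2$, the bound is the content of Proposition \ref{regularity}(b). The only mildly delicate endpoint is $J=J_c$, where $m^*=m_2(K,J_c)$; here Proposition \ref{regularity}(b) applies since its proof for $m_2$ covers the full half-line $J\ge J_c$ via the Taylor-expansion-by-contradiction argument around $m_2$.

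There is no substantial obstacle in this proof; the only point requiring a moment of care is verifying that the piecewise definition of $m^*$ is well-posed and agrees on overlapping sub-cases (in particular that the regime $J<\Psi(K)$ is correctly subsumed under the first branch of \eqref{Remark m^*}, and that the limit $\lim_{J\to J_c^-}m_1(K,J)=m_2(K,J_c)$ used in the proof of Proposition \ref{wall} guarantees the natural continuity of $m^*$ across $J=J_c$). Once these matchings are noted, the corollary follows without further computation.
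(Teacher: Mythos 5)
Your proposal is correct and matches the paper's (implicit) argument: the paper offers no separate proof of this corollary, treating it exactly as you do — an assembly of Proposition \ref{classification_m} for the inventory of critical points, the sign of $\Delta(K,J)$ from Proposition \ref{wall} to select the global maximizer in the coexistence regime, and Proposition \ref{regularity}(b) for $\phi''(m^*)<0$. Your extra checks (the inflection at $m=0$ when $J=J_c$, and the well-posedness of the case boundaries) are sound and only make explicit what the paper leaves tacit.
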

Note that on the curve $\gamma$ there are two global maximum points of $\phi(m)$.  Let us define

\begin{equation}
    \overline{\gamma}(K) :=
    \begin{cases}
    \gamma(K) , & \text{if \; $K > 0$ }\\
   J_c , & \text{if \; $K=K_c=0$}.
    \end{cases}
\end{equation}
Therefore by Proposition \ref{regularity} one can conclude that  $m^*(K,J)$ is continuous on its domain $(\mathbb{R}_+\times\mathbb{R})$\textbackslash $\gamma $ and it is $C^\infty$ on $(\mathbb{R}_+\times\mathbb{R})\setminus \overline{\gamma} $. Moreover the following holds:

\begin{proposition}\label{regularity_wall}
(Regularity properties.) The function $\overline{\gamma}(K)$ is $C^\infty(\mathbb{R}_+\setminus\{0\})$  and at least $C^1$  for $K=0$. In particular,

\begin{equation}
    \gamma\;'(K) := -\frac{2}{3}m_1(K,\gamma(K)) \qquad \forall \quad K>0
\end{equation}

and 
\begin{equation}
    \overline{\gamma}\; '(K_c) := -\frac{2}{3}m_c.
\end{equation}
\end{proposition}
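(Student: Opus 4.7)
The plan is to apply the implicit function theorem to the defining relation $\Delta(K,\gamma(K))=0$ established in Proposition \ref{wall}. First I would verify the non-degeneracy hypothesis: on the coexistence curve the two global maximizers satisfy $m_0=0$ and $m_1>0$, so the computation already performed inside the proof of Proposition \ref{wall} gives $\partial_J\Delta(K,\gamma(K))=\tfrac{1}{2}m_1(K,\gamma(K))^2>0$. Combined with the $C^\infty$ regularity of $(K,J)\mapsto m_1(K,J)$ on the interior of its domain established in Proposition \ref{regularity}(a), the implicit function theorem \ref{prop B2} immediately yields $\gamma\in C^\infty(\mathbb{R}_+\setminus\{0\})$.

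For the derivative formula I would differentiate $\Delta(K,\gamma(K))\equiv 0$ implicitly to obtain $\gamma'(K)=-\partial_K\Delta/\partial_J\Delta$. Using the identities from Proposition \ref{regularity}, namely $\partial_J\phi(m_i)=\tfrac{1}{2}m_i^2$ and $\partial_K\phi(m_i)=\tfrac{1}{3}m_i^3$, together with $m_0=0$, one reads off $\partial_J\Delta=\tfrac{1}{2}m_1^2$ and $\partial_K\Delta=\tfrac{1}{3}m_1^3$. Substituting gives the announced formula
$$\gamma'(K)=-\frac{2}{3}\,m_1(K,\gamma(K)),\qquad K>0.$$

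The final and most delicate step concerns the endpoint $K=0$, where the standard implicit-function machinery breaks down because the hypothesis $\phi''(m_1)<0$ degenerates into $\phi''(m_c)=0$; this is why only $C^1$ (and not $C^\infty$) regularity is claimed at the critical point. To handle it I would appeal to the limit behaviour already stated in Proposition \ref{prop uniq}: $\gamma(K)\to J_c=1$ and $m_1(K,\gamma(K))\to m_c=0$ as $K\to 0^+$. Together with the extension $\overline{\gamma}(0):=J_c$, the first limit gives continuity of $\overline{\gamma}$ at zero, and plugging the second into the formula for $\gamma'$ gives $\lim_{K\to 0^+}\gamma'(K)=-\tfrac{2}{3}m_c=0$, which simultaneously establishes the stated identity $\overline{\gamma}'(K_c)=-\tfrac{2}{3}m_c$ and upgrades $\overline{\gamma}$ to a $C^1$ function on all of $\mathbb{R}_+$. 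The main obstacle is therefore ensuring continuity of $m_1$ up to the critical endpoint; this is precisely what Proposition \ref{prop uniq} provides, so no new work is required beyond invoking it.
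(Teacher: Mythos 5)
Your treatment of the interior case $K>0$ coincides with the paper's: both apply the implicit function theorem to $\Delta(K,J)=0$, using the non-degeneracy $\partial_J\Delta(K,\gamma(K))=\tfrac12 m_1^2>0$ already computed in Proposition \ref{wall} and the smoothness of $m_1$ from Proposition \ref{regularity}, and both obtain $\gamma'(K)=-\partial_K\Delta/\partial_J\Delta=-\tfrac23 m_1(K,\gamma(K))$ from the identities $\partial_J\phi(m_i)=\tfrac12 m_i^2$, $\partial_K\phi(m_i)=\tfrac13 m_i^3$ together with $m_0=0$ on $\gamma$.

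The endpoint argument, however, contains a genuine logical gap: you source the limits $\gamma(K)\to J_c$ and $m_1(K,\gamma(K))\to m_c=0$ from Proposition \ref{prop uniq}, but in this paper Proposition \ref{prop uniq} is itself \emph{deduced} from Propositions \ref{classification_m}, \ref{regularity}, \ref{wall} and \ref{regularity_wall} (see the opening of Section \ref{sec2}), so invoking it inside the proof of Proposition \ref{regularity_wall} is circular. These two limits are precisely the content that must be established here, not assumed. The paper proves the first by the squeeze $\Psi(K)\le\gamma(K)\le J_c$ combined with $\lim_{K\to 0^+}\Psi(K)=J_c$, where $\Psi$ is the threshold of Proposition \ref{classification_m}; the second then follows from the consistency equation, since any limit point $\bar m\ge 0$ of $m_1(K,\gamma(K))$ along $K\to 0^+$, $J\to 1^-$ must satisfy $\arctanh(\bar m)=\bar m$, forcing $\bar m=0$. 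Once these facts are supplied, your concluding step --- continuity of $\overline{\gamma}$ at $0$ together with $\gamma'(K)\to 0$ gives $\overline{\gamma}\,'(K_c)=-\tfrac23 m_c=0$ via the mean value theorem --- matches the paper's. So the route is the right one, but the two limits need an independent, non-circular derivation.
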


\begin{proof}
i. We begin by showing that $\gamma(K)\in C^\infty(\mathbb{R}_+)$.  By Proposition \ref{wall}, $J=\gamma(K)$ is a unique solution of the equation 
\begin{equation*}
    \Delta(K,J) = 0,
\end{equation*}
where $\Delta$ is defined by equation \eqref{Delta} for $\Psi(K)\leq J<J_c$ and $K>0$. Furthermore, observe that $\Delta$ is $C^\infty$ in its domain  by the smoothness of $\phi$ and $m_1$. Recall from the proof of Proposition \ref{wall} that 

\begin{equation}
    \frac{\partial }{\partial J}\Delta(K,J) \neq 0 \qquad \forall \quad (K,J) \, \text{s.t.} \;\; J=\gamma(K), 
\end{equation}
hence, by the implicit function theorem \ref{coro B1} $\gamma(K)\in C^\infty(\mathbb{R}_+)$. Therefore
\begin{equation}
\begin{split}
    \Delta(K,\gamma(K)) \equiv 0 &= \frac{d}{dK}\Delta(K,\gamma(K))\cr
    &=  \frac{\partial \Delta}{\partial J}(K,\gamma(K)) \gamma\,'(K) + \frac{\partial \Delta}{\partial K}(K,\gamma(K))\cr
    \Longrightarrow \gamma\,'(K) =& -\frac{\partial \Delta}{\partial K}/\frac{\partial \Delta}{\partial J}(K,\gamma(K))
\end{split}
\end{equation}
From equations \eqref{regJ_mu} and \eqref{regK_mu}, we have that,
\begin{equation*}
    \frac{\partial \Delta}{\partial K} = \frac{m_1^3}{3} - \frac{m_0^3}{3} \qquad \text{and} \qquad \frac{\partial \Delta}{\partial J} = \frac{m_1^2}{2} - \frac{m_0^2}{2},
\end{equation*}
hence

\begin{equation}
    \gamma\,'(K) = -\frac{2}{3} m_1(K,\gamma(K))
\end{equation}
since $m_0(K,\gamma(K))=0, \; \forall \, K>0$.
Notice that, by \eqref{mean-field eqn}, $m_1(K,\gamma(K)) \xrightarrow[K\rightarrow \infty]{} 1$ which implies that $$\lim_{K\to \infty}\gamma\,'(K)=-\frac{2}{3}.$$ A consequence of this property is that also when $J<0$ (antiferromagnetic case) and very large there is always going to be phase transition between a polarized and unpolarized state.

ii. Now we prove that the extended function $\overline{\gamma}\in C^1(\mathbb{R}_{+})$. 
Recall that $\gamma(K)\in[\Psi(K),J_c]$
and observe that $\lim_{K\to K_c^+}\Psi(K)=J_c$ then
\begin{equation*}
    \lim_{K\rightarrow K_c^+} \gamma(K) = J_c 
\end{equation*}
which implies that $\overline{\gamma}$ is continuous at $K_c$. 
Now we have that \begin{equation}\label{gradient_wall}
    \gamma\,'(K) = -\frac{2}{3} m_1(K,\gamma(K)) \;
    \xrightarrow[K\rightarrow K_c^+]{} -\frac{2}{3} m_c=0
\end{equation}
which implies that $
\overline{\gamma}'(K_c)=-\frac{2}{3} m_c=0$ by the application of mean value theorem.
\end{proof}

\subsection{Proof of Theorem \ref{asymptotic_mag}}\label{proof theorem}
In this section we provide the details of the proof for Theorem  \ref{asymptotic_mag} following closely the argument in \cite{Mukherjee2021}.

\begin{proof}

$1.$ By proposition  \ref{prop uniq} we know that if $(K,J)\in (\mathbb{R}_+\times\mathbb{R}) \backslash (\gamma\cup (K_c,J_c))$ then $\phi(m)$ has a unique global maximizer $m^*$ with $\phi''(m^*)<0$. It's easy to check that $\phi(m)$ satisfies the hypothesis of Lemma  \ref{Appendix: lemma ZN exp unique}, therefore   \eqref{conc1} gives concentration inequality for $m_N$ in a suitable  neighbourhood of $m^*$ under the probability measure \eqref{Gibbs}. More precisely, for any  $\alpha\in(0,\frac{1}{6}]$ and $N$ large enough one has

\begin{equation}\label{conc1p}
\mu_N(m_N\in B^c_{N,\alpha}(m^*))= \exp\bigg\{\frac{1}{2}N^{2\alpha} \phi''(m^*)\bigg\}\mathcal{O}(N^{\frac{3}{2}})
\end{equation}

where $B^c_{N,\alpha}(m^*)=\{m\in\mathbb{R}:|m-m^*|\leq N^{-\frac{1}{2}+\alpha}\}$. Therefore the convergence in distribution \eqref{lln_m*} follows from \eqref{conc1p} by standard approximation arguments.

To obtain the central limit for  $m_N$, it is enough to compute the limit of the moment generating function of the rescaled random variable $N^{\frac{1}{2}}(m_N-m^*)$.   For a fixed $t\in\mathbb{R}$,
the moment generating function of $N^{\frac{1}{2}}(m_N-m^*)$ can be expressed as 

\begin{equation}\label{mgf}
    \mathbb{E}\bigg[e^{tN^{\frac{1}{2}}(m_N-m^*)} \bigg] = e^{-tN^{\frac{1}{2}}m^{*}} \frac{\bar{Z}_N(t)}{Z_N}.
\end{equation}

where $\bar{Z}_N(t)$ is a perturbed partition function associated to an Hamiltonian

\begin{equation}
\bar{H}_N(\sigma)= H_N(\sigma)+ \sqrt{N}\,t\,m_N(\sigma).
\end{equation}

We start by noticing that $\bar{H}_N(\sigma)= -N f_N(m_N(\sigma))$ where $f_N(x)= \frac{K}{3} x^3 +\frac{J}{2} x^2  +\frac{1}{\sqrt{N}}t x$ and then $f_N$ together with all its derivatives tends uniformly to $f(x)= \frac{K}{3} x^3 +\frac{J}{2} x^2$. Therefore one can use  Lemma \ref{Appendix: lemma ZN exp unique} to obtain an asymptotic expansion for both $Z_N$ and $\bar{Z}_N(t)$. More precisely one gets 
\begin{equation}\label{zexpansion}
    \frac{\bar{Z}_N(t)}{Z_N} = e^{N\big(\phi_N(m_N^*(t))-\phi(m^*)\big)} (1+\mathcal{O}(N^{-\frac{1}{2}+3\alpha})),
\end{equation}

where $\phi_N(x)=f_N(x)-I(x)$ and for $N$ large enough $m_N^*(t)$ is its unique maximizer. Now, let's observe that $m_N^*(0) = m^*$ and  $m_N^*(t)$ satisfies the equation

\begin{equation}
m_N^*(t) = \tanh \bigg(Km_N^*(t)^2+Jm_N^*(t)+\frac{t}{\sqrt{N}}\bigg).
\end{equation}

Hence, it's easy to check that $\frac{\partial m_N^*(t)}{\partial t}|_{t=0} = -\frac{1}{\sqrt{N}\phi ''(m^*)}$ and  $\frac{\partial^2 m_N^*(t)}{\partial t^2}= \mathcal{O}(N^{-1})$. Therefore the Taylor's expansion of $m_N^*(t)$ around $t=0$ is
\begin{equation}\label{mexpansion}
    m_N^*(t) = m_N^*(0) -\frac{t}{\sqrt{N}\phi ''(m^*)} +  \mathcal{O}(N^{-1}).
\end{equation}
Moreover one can easily check that $\phi_N(m_N^*(t))=\phi(m_N^*(t))+\dfrac{t}{\sqrt{N}}m_N^*(t)$. Hence the  Taylor expansion of  $\phi(m_N^*(t))$ around $m^*$ is
\begin{equation}\label{phiN-phi}
    N\bigg(\phi_N(m_N^*(t))-\phi(m^*) \bigg) = \frac{N}{2}\bigg[(m_N^*(t))-(m^*)\bigg]^2 \phi ''(m^*)+ \sqrt{N}tm_N^*(t) +o(1).
\end{equation}
Finally using equations  \eqref{mexpansion} and \eqref{phiN-phi} in the above, one gets

\begin{equation}
N\bigg(\phi_N(m_N^*(t))-\phi(m^*) \bigg) =   t\sqrt{N} m^{*} - \frac{t^2}{2\phi''(m^*)} +o(1)
\end{equation}
and by \eqref{zexpansion} the limiting moment generating function is given as
\begin{equation}\label{clt_unique}
    \lim_{N\rightarrow\infty} \mathbb{E}\bigg[e^{tN^{\frac{1}{2}}(m_N(\sigma)-m^*)} \bigg] = \exp{\bigg\{-\dfrac{t^2}{2\phi''(m^*)}\bigg\}},
\end{equation}
which implies \eqref{asymptotic_uniq}.

$2.$ Let's recall that by Proposition \ref{prop uniq} there exist two global maximizers $m_i$ of $\phi(m)$ for $i\in\{0,1\}$ on $\gamma$. Moreover by point $b)$  of Proposition \ref{regularity} we know that $\phi''(m_i)<0$ for  $i\in\{0,1\}$. Now, following the same argument as before, formula \eqref{gibbs mi} in Lemma \ref{Appendix: lemma ZN exp many max} gives the concentration inequality for $m_N$ within a suitable neighbourhood of $m_i$ with respect to the Gibbs measure \eqref{Gibbs}. Therefore the convergence in distribution \eqref{lln_mi} and \eqref{rho_i mi} follows the asymptotic expansions of the (restricted) partition function  in  Lemma \ref{Appendix: lemma ZN exp many max}.

To obtain the local central limit theorem for $m_N$  around the global maximizers $m_i$, we will show that the moment generating function of $N^{\frac{1}{2}}(m_N-m_i)\big|\{m_N\in A_i\}$ with respect to the measure $\mu_N$ converges pointwise in distribution to the moment generating function of $\mathcal{N}\bigg(0,-\frac{1}{\phi''(m_i)}\bigg)$. Here $A_i \subset [-1,1]$ is such that $m_i$ is the unique maximizer of $\phi(m)$ on its interior. The moment generating function of $N^{\frac{1}{2}}(m_N-m_i)\big|\{m_N\in A_i\}$ at a fixed $t\in \mathbb{R}$ is

\begin{equation}\label{mgf mi}
    \mathbb{E}\bigg[e^{tN^{\frac{1}{2}}(m_N-m_i)}\bigg|\{m_N\in A_i\} \bigg] = e^{-tN^{\frac{1}{2}}m_i} \frac{\bar{Z}_N(t)\big|_{A_i}}{Z_N\big|_{A_i}}.
\end{equation}
Following the asymptotic expansion of the partition function in \eqref{Z|A mi} (see Lemma \ref{Appendix: lemma ZN exp many max}), the fraction on the right side of equation \eqref{mgf mi} reduces to 

\begin{equation}
    \frac{\bar{Z}_N(t)\big|_{A_i}}{Z_N\big|_{A_i}} \sim e^{N\big(\phi_N(m_{i,N}(t))-\phi(m_i)\big)}.
\end{equation}
Now, taking Taylor's expansion of $\phi_N(m_{i,N}(t))$ at $m_i$ up to the second order, one can repeat the same arguments as in the unique maximum case, obtaining 
\begin{equation}
    \mathbb{E}\bigg[e^{tN^{\frac{1}{2}}(m_N-m_i)}\bigg|\{m_N\in A_i\} \bigg] \xrightarrow[N\rightarrow\infty]{} \exp\bigg\{-\frac{t^2}{2\phi''(m_i)}\bigg\}.
\end{equation}
This completes the proof of \eqref{asymptotic_mi}.

$3.$ Notice that the critical point $(K_c,J_c)=(0,1)$ is a degenerate maximum point for $\phi(m)$ in the sense that $\phi''(m^*(K,J))\big|_{(K,J)=(0,1)}=0$. This does not allow the use of the asymptotic expansions in Lemma \ref{Appendix: lemma ZN exp unique}.
However, one can simply  notice that the Hamiltonian $H_N$ of the model  at the critical point $(K_c,J_c)=(0,1)$ coincides at any $N\in\mathbb{N}$ with the Hamiltonian function of the standard Curie-Weiss model at the critical temperature $J=1$ and zero external field. Therefore  \eqref{lln_mc}  and \eqref{asymptotic_cpt} are a well known results and their proof can be found in   \cite{EllisNewman78}.

\end{proof}


\subsection{Proof of Proposition \ref{critical exp}}\label{proof critical exp}

\begin{proof}
Let us start with the case $\alpha\geq0$. This implies from equation \eqref{J(K) curve} that $J(K)\geq J_c=1$ and then $m^*(K)\equiv m_2(K,J(K))$ where $m_2$ is the only positive solution of the consistency equation \eqref{consistency eqn}.

Clearly $m^*(K)\to 0$ as $K\to 0^+$, hence  by Taylor's expansion we have that

\begin{equation}
\begin{split}
     m^*(K) &= J(K)m^*(K) + Km^*(K)^2 - \frac{J(K)^3m^*(K)^3}{3} + \mathcal{O}(m^*(K)^4)\cr
     &= (1+\alpha K)m^*(K) +Km^*(K)^2 - \frac{(1+\alpha K)^3m^*(K)^3}{3}+\mathcal{O}(m^*(K)^4).
\end{split}
\end{equation}
Hence 
\begin{equation*}
   \frac{(1 +\alpha^3K^3+3\alpha^2K^2+3\alpha K)m^*(K)^2}{3} - K m^*(K) -\alpha K= \mathcal{O}(m^*(K)^3).
\end{equation*}
From the above equation, neglecting higher order corrections  we have 

\begin{equation}\label{calc_crit exp}
    m^*(K) \sim \frac{3}{2} \left(K+\sqrt{K^2+ \frac{4}{3}\alpha K + 4\alpha^2K^2}  \right) .
\end{equation}
Now, if $\alpha > 0$ then  

\begin{equation}\label{}
    m^*(K)\sim \sqrt{3\alpha K}.
\end{equation}
Otherwise if $\alpha = 0$, then 

\begin{equation}
    m^*(K)\sim \frac{3}{2} \left(K+ \sqrt{K^2}  \right) \sim 3 K.
\end{equation}
 
Let's turn on the case  $\alpha < 0$. From Proposition \ref{regularity_wall} we know that $\gamma(K)$ is at least $C^1$ at $K=0$. Since $\lim_{K\to 0^+} \gamma\,'(K)=0$ we know that if $J(K)<\gamma(K)$ for $K$ small enough, then $m^*(K)\equiv m_0(K,J)=0$.

\end{proof}

\section{Conclusion and perspectives}\label{sec4:conclusion}
In this work, we have studied how the three-body interaction, which provides a spin-flip symmetry-breaking parameter, induces  phase transitions with novel properties in the mean-field setting. In particular, we derived all the critical exponents and the limiting distribution of a suitably rescaled magnetization in the entire phase space. The presence of a stable paramagnetic phase and the fact that, also in the antiferromagnetic regime, the model presents phase transitions and phase coexistence are interesting for applications in socio-technical environments \cite{ContucciKO2022} and possibly in other fields \cite{LiggettSteif2007, MajhiPercGhosh2022}. 

A possible research development will be to extend the results of the present work to multi-populated models \cite{ContucciKO2022, LoweSch2018, LoweSchVer2020, BerthetRS2019, GalloBC2009, GalloC2008, FedeleC2011, ContucciGG2008, OpokuOsa2019, ContucciGh2007}. In these models, the invariance of the Hamiltonian with respect to  permutations among sites is replaced by a weaker one that takes into account the existence of different species of spins. This setting is particularly useful in social science applications \cite{ContucciKO2022, GalloBC2009, ContucciGG2008, OpokuOsa2019}. Moreover, as mentioned in the introduction,  the mean-field approximation involved in the study of some finite-dimensional lattices provides a natural emergence of the multi-populated models.
 It is well known, for instance, that a system on a simple cubic lattice \cite{KinCohen1975, GalamYS1998} with ferromagnetic and antiferromagnetic couplings has a factorized equilibrium measure that corresponds to a two-populated mean-field model. Similarly, it has been shown in \cite{Bidaux1986} that on a regular square lattice, a system with cubic interaction has a product state equilibrium described by a two-populated mean-field model, while on a regular triangular lattice \cite{FroyenSH1976}, by a three-populated mean-field model.  
 
We also mention that in the case of quadratic interaction, Stein's method provides stronger results (Berry-Esseen type bounds) on the rate at which the convergence to the normal distribution takes place (see \cite{EichelsbacherLowe2010,ChatterjeeShao2011}). The extension of the above method to our model  and more generally to higher order interaction  is an interesting open problem. We plan to develop those research directions in the future.

\appendix

\section{Technical results}\label{appendix:A}
This section of the appendix presents some useful technical results applied in the work. We begin by stating the Berge's maximum theorem in the following Proposition.

\begin{proposition}\label{prop B1}
    Let $f:[-1,1]\times \mathbb{R}^n \to \mathbb{R}$ and $c:\mathbb{R}^m \to [-1,1]$ be continuous functions.
    \begin{itemize}
        \item[(a)] The following function is continuous:
        \begin{equation*}
            F: \mathbb{R}^n \times \mathbb{R}^m \to \mathbb{R}, \quad F(x,y) =\max_{v\in[-1,c(y)]} f(v,x).
        \end{equation*}

        \item[(b)] Suppose that for all $x,y\in \mathbb{R}^n$ the function $v\mapsto f(v,x)$ achieves its maximum on $[-1,c(y)]$ in a unique point. Then also the following function is continuous:
        \begin{equation*}
            V: \mathbb{R}^n \times \mathbb{R}^m \to [-1,1], \quad V(x,y) = \underset{v\in[-1,c(y)]}{\mathrm{argmax}} f(v,x).
        \end{equation*}
    \end{itemize}
\end{proposition}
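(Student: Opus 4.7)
My plan is to prove parts (a) and (b) as a standard Berge-type maximum theorem argument, using compactness of the feasible set and continuity of the data, with the main work being the careful control of how the moving interval $[-1,c(y)]$ interacts with the maximizers.

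For part (a), I will show continuity of $F$ by separately establishing upper and lower semicontinuity at an arbitrary point $(x,y)$. Fix a sequence $(x_n,y_n)\to(x,y)$. For upper semicontinuity, pick a maximizer $v_n\in[-1,c(y_n)]$ for each $n$; by compactness of $[-1,1]$, extract a subsequence $v_{n_k}\to v^\star$, and note $v^\star\in[-1,c(y)]$ since $c(y_{n_k})\to c(y)$ and the set $\{(v,y):-1\le v\le c(y)\}$ is closed. Continuity of $f$ then gives $F(x_{n_k},y_{n_k})=f(v_{n_k},x_{n_k})\to f(v^\star,x)\le F(x,y)$, so $\limsup F(x_n,y_n)\le F(x,y)$. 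For lower semicontinuity, pick any maximizer $v^\star$ of $v\mapsto f(v,x)$ on $[-1,c(y)]$ and set $v_n:=\min(v^\star,c(y_n))\in[-1,c(y_n)]$; since $v^\star\le c(y)$ and $c$ is continuous, $v_n\to v^\star$, whence $F(x_n,y_n)\ge f(v_n,x_n)\to f(v^\star,x)=F(x,y)$, giving $\liminf F(x_n,y_n)\ge F(x,y)$.

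For part (b), under the uniqueness assumption I will argue by a standard subsequence trick. Fix $(x_n,y_n)\to(x,y)$ and set $v_n:=V(x_n,y_n)\in[-1,1]$. Suppose for contradiction that $v_n\not\to V(x,y)=:v^\star$. Then some subsequence satisfies $v_{n_k}\to w\ne v^\star$, and as in (a) the limit belongs to $[-1,c(y)]$. By part (a), $f(v_{n_k},x_{n_k})=F(x_{n_k},y_{n_k})\to F(x,y)$, while continuity of $f$ gives $f(v_{n_k},x_{n_k})\to f(w,x)$. Hence $f(w,x)=F(x,y)$, so $w$ is also a maximizer on $[-1,c(y)]$, contradicting uniqueness. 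Therefore $v_n\to v^\star$ and $V$ is continuous.

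The only mildly delicate point I expect is the lower-semicontinuity step in (a): one must produce a feasible approximating sequence $v_n\in[-1,c(y_n)]$ converging to a prescribed $v^\star\in[-1,c(y)]$, and the truncation $v_n=\min(v^\star,c(y_n))$ works precisely because the constraint moves only through its upper endpoint $c(y)$ while the lower endpoint $-1$ is fixed. Everything else is a routine compactness-plus-continuity argument, and no properties beyond continuity of $f$ and $c$ are needed.
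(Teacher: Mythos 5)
Your argument is correct, but note that the paper itself gives no proof of Proposition \ref{prop B1}: it is stated as Berge's maximum theorem and justified by citation (to Ok's textbook and to Alberici--Contucci--Mingione), so there is no in-paper argument to match. What you supply is the standard self-contained proof, and it exploits the special structure of the constraint set here: since only the upper endpoint $c(y)$ of the interval moves, lower hemicontinuity of the constraint correspondence --- the delicate hypothesis in the general Berge theorem --- comes for free via your truncation $v_n=\min(v^\star,c(y_n))$, while upper hemicontinuity follows from closedness of $\{(v,y): -1\le v\le c(y)\}$ and compactness of $[-1,1]$. Both halves of (a) and the uniqueness-plus-subsequence argument for (b) are sound. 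The only presentational nit is in your upper-semicontinuity step: extracting one convergent subsequence of maximizers bounds $\lim F(x_{n_k},y_{n_k})$ along that subsequence, not $\limsup_n F(x_n,y_n)$ directly; you should first pass to a subsequence realizing the $\limsup$ and then extract the convergent sub-subsequence of maximizers (or note that every subsequence admits a further subsequence along which the limit is $\le F(x,y)$). This is a one-line repair and does not affect the validity of the proof.
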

The following proposition partially states Dini's implicit function theorem. Then we provide two simple corollaries that are used in the paper.

\begin{proposition}\label{prop B2}
    Let $F: \mathbb{R}^n \times \mathbb{R} \to \mathbb{R}$ be a $C^{\infty}$ function. Let $(x_0,y_0)\in \mathbb{R}^n \times \mathbb{R}$ such that $F(x_0,y_0)= 0$ and $\frac{\partial F}{\partial y} (x_0,y_0) \neq 0$. Then there exist $\delta>0, \epsilon>0$ and a $C^{\infty}$ function $f:B(x_0,\delta) \to B(y_0,\epsilon)$ such that for all $(x,y) \in B(x_0,\delta) \times B(y_0,\epsilon)$
    $$F(x,y)=0 \quad \iff \quad y = f(x)$$
\end{proposition}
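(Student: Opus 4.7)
The plan is to reduce the problem to a one-dimensional monotonicity argument, apply the intermediate value theorem to construct $f$, and then bootstrap the smoothness. Without loss of generality assume $\partial_y F(x_0, y_0) > 0$; otherwise replace $F$ by $-F$, which preserves the zero set. By continuity of $\partial_y F$ there exist $\delta_0, \epsilon > 0$ and $\kappa > 0$ with $\partial_y F \geq \kappa$ on the compact box $\overline{B(x_0, \delta_0)} \times [y_0 - \epsilon, y_0 + \epsilon]$. For each fixed $x$ in this neighborhood, the map $y \mapsto F(x, y)$ is then strictly increasing on $[y_0 - \epsilon, y_0 + \epsilon]$. Since $F(x_0, y_0) = 0$, we have $F(x_0, y_0 - \epsilon) < 0 < F(x_0, y_0 + \epsilon)$, and joint continuity of $F$ lets us shrink $\delta_0$ to some $\delta \in (0, \delta_0]$ so that these strict inequalities persist for all $x \in B(x_0, \delta)$.

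For every such $x$, the intermediate value theorem applied to $y \mapsto F(x, y)$ yields a zero in $(y_0 - \epsilon, y_0 + \epsilon)$, and strict monotonicity makes it unique; call it $f(x)$. This defines a function $f: B(x_0, \delta) \to B(y_0, \epsilon)$ satisfying the claimed equivalence on the product box. Continuity of $f$ then follows from a subsequence argument: if $x_n \to x$, any accumulation point $y^{\star}$ of the bounded sequence $\{f(x_n)\}$ satisfies $F(x, y^{\star}) = 0$ by continuity of $F$, hence $y^{\star} = f(x)$ by uniqueness, so $f(x_n) \to f(x)$.

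For differentiability, fix $x \in B(x_0, \delta)$ and small $h \in \mathbb{R}^n$. Since $F(x+h, f(x+h)) = F(x, f(x)) = 0$, the first-order Taylor expansion of $F$ at $(x, f(x))$ gives
\begin{equation*}
0 = \nabla_x F(x, f(x)) \cdot h + \partial_y F(x, f(x))\,(f(x+h) - f(x)) + R(h),
\end{equation*}
with $R(h) = o(|h| + |f(x+h) - f(x)|)$. Using $|\partial_y F(x, f(x))| \geq \kappa$ and continuity of $f$, one first absorbs the remainder to obtain the Lipschitz bound $|f(x+h) - f(x)| = O(|h|)$; re-injecting this estimate into $R(h)$ reduces it to $o(|h|)$, which produces both differentiability and the formula
\begin{equation*}
\partial_{x_i} f(x) = -\frac{\partial_{x_i} F(x, f(x))}{\partial_y F(x, f(x))}, \qquad i = 1, \ldots, n.
\end{equation*}

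The final step is the smoothness bootstrap. The displayed formula expresses each first partial of $f$ as a quotient of $C^\infty$ functions composed with $x \mapsto (x, f(x))$, with denominator uniformly bounded away from zero on $B(x_0, \delta)$. Induct on $k$: if $f$ is already known to be $C^k$, then the right-hand side is a composition of $C^k$ functions, hence $C^k$, so every $\partial_{x_i} f$ is $C^k$ and $f$ is $C^{k+1}$. This gives $f \in C^\infty(B(x_0, \delta))$. The main obstacle in the whole argument is the passage from continuity to $C^1$, where one must simultaneously control the oscillation of $f$ and the Taylor remainder; once that step is secured, the higher regularity is automatic from the closed-form expression for the partial derivatives.
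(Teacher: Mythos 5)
Your proof is correct and complete: the monotonicity argument via a uniform lower bound $\kappa$ on $\partial_y F$, the intermediate value theorem construction of $f$, the subsequence argument for continuity, the two-pass absorption of the Taylor remainder (first to get the Lipschitz bound $|f(x+h)-f(x)|=O(|h|)$, then to reduce $R(h)$ to $o(|h|)$), and the inductive bootstrap from the closed-form expression for $\partial_{x_i} f$ are all sound, and together they establish exactly the statement, including the two-sided equivalence on the product neighbourhood. There is, however, no in-paper proof to compare against: the paper presents this proposition as a partial statement of Dini's (classical) implicit function theorem and simply cites the standard literature (Rudin's \emph{Principles of Mathematical Analysis} and a prior paper of two of the authors) wherever it is invoked. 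Your argument is essentially the textbook proof one would find in those references — the one-dimensional monotonicity reduction is the standard route when the implicit variable is scalar, as it is here — so while your write-up supplies detail the paper deliberately omits, it does not diverge from the intended justification. The only point worth flagging is minor: when you assert uniqueness of the zero $y^{\star}$ in the continuity step, you should note explicitly that the accumulation point lies in the \emph{closed} interval $[y_0-\epsilon,\,y_0+\epsilon]$ on which strict monotonicity was established, so uniqueness still applies there; this is implicit in your setup and does not affect correctness.
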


\begin{corollary}\label{coro B1}
     Let $F: \mathbb{R}^n \times \mathbb{R} \to \mathbb{R}$ be a $C^{\infty}$ function. Let $\varphi: \mathbb{R}^n \to \mathbb{R}$ be a continuous
function such that for all $x\in\mathbb{R}^n$ such that $F(x,\varphi(x))= 0$ and $\frac{\partial F}{\partial y} (x,\varphi(x)) \neq 0$, then $\varphi(x)\in C^{\infty}(\mathbb{R}^n)$.
\end{corollary}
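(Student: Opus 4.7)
The plan is to deduce the corollary from Proposition \ref{prop B2} by a standard localization argument: smoothness is a local property, so it suffices to show that $\varphi$ agrees, on a neighborhood of each point, with the $C^\infty$ function supplied by the implicit function theorem.

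More precisely, I would fix an arbitrary $x_0 \in \mathbb{R}^n$ and set $y_0 := \varphi(x_0)$. By hypothesis, $F(x_0, y_0) = 0$ and $\partial_y F(x_0, y_0) \neq 0$, so Proposition \ref{prop B2} applies at $(x_0, y_0)$ and yields radii $\delta, \epsilon > 0$ together with a $C^\infty$ function $f : B(x_0, \delta) \to B(y_0, \epsilon)$ such that, on the product neighborhood $B(x_0,\delta) \times B(y_0,\epsilon)$, the equation $F(x,y) = 0$ is equivalent to $y = f(x)$.

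Next I would use the continuity of $\varphi$ at $x_0$ to shrink $\delta$ if necessary so that $\varphi(B(x_0, \delta)) \subseteq B(y_0, \epsilon)$. For any $x \in B(x_0, \delta)$ we have $F(x, \varphi(x)) = 0$ by assumption and $\varphi(x) \in B(y_0, \epsilon)$ by the choice of $\delta$, so the local uniqueness in Proposition \ref{prop B2} forces $\varphi(x) = f(x)$ on this neighborhood. Consequently $\varphi$ coincides with the $C^\infty$ function $f$ on $B(x_0, \delta)$, which proves $\varphi$ is $C^\infty$ at $x_0$. Since $x_0 \in \mathbb{R}^n$ was arbitrary, $\varphi \in C^\infty(\mathbb{R}^n)$.

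There is no real obstacle here; the only subtlety is making sure that the continuity of $\varphi$ is used to push $\varphi(x)$ into the $y$-ball $B(y_0, \epsilon)$ on which the implicit function theorem guarantees uniqueness, for otherwise one could not conclude $\varphi = f$ locally. Apart from that, the argument is a direct application of Proposition \ref{prop B2} patched together over all base points.
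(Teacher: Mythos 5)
Your argument is correct and is exactly the intended derivation: the paper states Corollary \ref{coro B1} as an immediate consequence of Proposition \ref{prop B2} and omits the proof, and your localization argument (apply the implicit function theorem at each $(x_0,\varphi(x_0))$, use continuity of $\varphi$ to land in the uniqueness ball $B(y_0,\epsilon)$, and conclude $\varphi=f$ locally) supplies precisely the missing details. The one subtlety you flag --- that continuity of $\varphi$ is what licenses the identification $\varphi=f$ near $x_0$ --- is indeed the only nontrivial point.
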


\begin{corollary}\label{coro B2}
     Let $F: \mathbb{R}^n \times \mathbb{R} \to \mathbb{R}$ be a $C^{\infty}$ function. Let 
     $a,b: \mathbb{R}^n \to \mathbb{R}$ be a continuous
function such that for all $a<b$. Suppose that for all $x\in \mathbb{R}^n$ there exists a unique $y=\varphi(x) \in (a(x),b(x))$ such that $F(x,\varphi(x))= 0$. Moreover, suppose that for all $x\in \mathbb{R}^n$, $\dfrac{\partial F}{\partial y} (x,\varphi(x)) \neq 0$, then $\varphi(x)\in C^{\infty}(\mathbb{R}^n)$.
\end{corollary}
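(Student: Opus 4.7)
The plan is to apply the implicit function theorem (Proposition \ref{prop B2}) at each point of $\mathbb{R}^n$ and then invoke the uniqueness of the root inside the open interval $(a(x),b(x))$, together with the continuity of $a$ and $b$, to identify the local smooth resolution with $\varphi$. No separate continuity proof for $\varphi$ is required: its continuity, and in fact its smoothness, drops out as a by-product, so Corollary \ref{coro B2} becomes essentially a localized version of Corollary \ref{coro B1} in which the a priori continuity assumption on $\varphi$ is replaced by the stronger uniqueness-in-an-open-interval hypothesis.

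First I would fix an arbitrary $x_0 \in \mathbb{R}^n$ and set $y_0 := \varphi(x_0)$, which by hypothesis lies in $(a(x_0), b(x_0))$. Since $F(x_0,y_0)=0$ and $\partial_y F(x_0,y_0)\neq 0$, Proposition \ref{prop B2} yields radii $\delta_0,\varepsilon>0$ and a $C^{\infty}$ function $f: B(x_0,\delta_0)\to B(y_0,\varepsilon)$ with $f(x_0)=y_0$ such that, within the box $B(x_0,\delta_0)\times B(y_0,\varepsilon)$, the zero set of $F$ coincides exactly with the graph of $f$.

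Next I would shrink the base radius. Using the continuity of $a$, $b$ and $f$ at $x_0$, combined with the strict inequalities $a(x_0)<y_0<b(x_0)$, I can choose $0<\delta\leq\delta_0$ small enough that $f(x)\in(a(x),b(x))$ for every $x\in B(x_0,\delta)$. For each such $x$, the value $f(x)$ is then a zero of $F(x,\cdot)$ sitting inside the open interval $(a(x),b(x))$, and the uniqueness hypothesis forces $f(x)=\varphi(x)$. Hence $\varphi$ coincides with the $C^{\infty}$ function $f$ on the ball $B(x_0,\delta)$, so $\varphi$ is $C^{\infty}$ in a neighborhood of $x_0$. Since $x_0$ was arbitrary, $\varphi\in C^{\infty}(\mathbb{R}^n)$.

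The only step that demands any care is the shrinking that keeps $f(x)$ inside $(a(x),b(x))$, but this is immediate from the continuity of the three functions and from the strict inclusion at the base point. I do not expect a substantive obstacle: the real content of the corollary is that uniqueness on the open tube $\{(x,y)\in\mathbb{R}^n\times\mathbb{R}\,:\,a(x)<y<b(x)\}$ upgrades the purely local conclusion of the implicit function theorem to a global one, and this is a soft topological argument once the correct local resolution is in place.
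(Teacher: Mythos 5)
Your argument is correct. The paper states Corollary \ref{coro B2} as an immediate consequence of Proposition \ref{prop B2} and gives no proof, so there is nothing to compare against; your write-up (apply the implicit function theorem at each base point, shrink the ball so that the local resolution $f$ stays inside the open tube $a(x)<f(x)<b(x)$ using continuity of $a$, $b$, $f$ and the strict inequalities at $x_0$, then invoke uniqueness to conclude $f=\varphi$ locally) is exactly the standard argument the authors leave implicit, and it correctly supplies the one step that needs care.
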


\section{Concentration results and asymptotic expansions}\label{appendix:B} 
In this section of the appendix, we state concentration properties of the magnetization and asymptotic expansions of the partition function for a large class of Ising mean-field models and give proofs using the same methods and arguments recently introduced in \cite{Mukherjee2021}.

Consider a mean-field spin model with energy density $f_N$, namely
\begin{equation}
H_N(\sigma)=-Nf_N(m_N(\sigma))\,,\, \,\, \sigma\in \{ -1,1\}^N
\end{equation}

where $m_N=\frac{1}{N}\sum_{i\leq N} \sigma_i$ is the magnetization density. 
We assume that  $(f_N )$  is a sequence of continuous  functions $f_N:[-1,1]^N\to \mathbb{R}$ converging uniformly to $f$. We assume also that $f_N$ has bounded derivatives up to order 4  converging uniformly to $f', f'', f''', f''''$. We denote the law of the magnetization under  the Gibbs measure by
\begin{equation}\label{appendix gibbs}
    \mu_N(\sigma) = \dfrac{e^{-H_N(\sigma)}}{Z_N} .
\end{equation}
The partition function   $Z_N$ can be written as

\begin{equation}\label{A-partition}
Z_N= \sum_{x\in R_N} A_N(x) e^{N f_N(x)},
\end{equation}
where $R_N=\{-1+\frac{2k}{N},k=0,\ldots, N\}$ and $A_N(x)=\mathrm{card}\{\sigma\in \{-1,1\}^N: m_N(\sigma)=x \}$. Now, it follows from \cite{Talagrand2003} that, for some universal constant $L$
\begin{equation}\label{talagrand}
    \dfrac{1}{L\sqrt{N}}e^{-NI(x)}\leq A_N \leq e^{-NI(x)}
\end{equation}
where $I(x)$ is defined in \eqref{entropy}. Define the sequence  $\phi_N$ as 
\begin{equation}\label{Appendix free}
\phi_N(x)=f_N(x)-I(x).
\end{equation}
Notice the assumption on $(f_N)$ that  $\phi_N\to \phi = f-I$
uniformly on $(-1,1)$, as well as its derivarites up to order 4 on $(-1,1)$. 

The following lemmata contains  concentration properties of the magnetization $m_N$ w.r.t. the Gibbs measure $\mu_N$ and  asymptotic expansions of the partition function $Z_N$. For any $\alpha>0$ and $y\in \mathbb{R}$ we denote by $B_{N,\alpha}(y)$ the open ball with center $y$ and radius $N^{-1/2+\alpha}$ and by $B^c_{N,\alpha}(y)$ its complement.



\begin{lemma}\label{Appendix: lemma ZN exp unique}
Assume  that $\phi(x)$ has a unique global maximizer $x^*\in(-1,1)$ such that  $\phi''(x^*)<0$.  Then for $N$ large enough $\phi_N$  has a unique maximizer $x^*_N\rightarrow x^*$ such that $\phi_N''(x_N^*)<0$. Moreover for $\alpha\in\big(0,\frac{1}{6}\big]$ and $N$ large enough we have that

\begin{equation}\label{conc1}
\mu_N(m_N\in B^c_{N,\alpha}(x_N^*))= \exp\bigg\{\frac{1}{2}N^{2\alpha} \phi_N''(x^*_N)\bigg\}\mathcal{O}(N^{\frac{3}{2}})
\end{equation}

and the partition function \eqref{A-partition} can be expanded as,
\begin{equation}\label{Z m*}
    Z_N = \dfrac{e^{N\phi_N(x_N^*)}}{\sqrt{(x_N^{*^2}-1)\phi_N''(x_N^*)}}\bigg(1+\mathcal{O}\bigg(N^{-\frac{1}{2}+\alpha}\bigg) \bigg),
\end{equation}
\end{lemma}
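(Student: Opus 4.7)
My plan is to proceed in three stages adapted to the discrete Laplace sum defining $Z_N$: first obtain the existence of a unique maximizer $x_N^*$ of $\phi_N$ converging to $x^*$, then establish concentration via a Taylor-expansion plus Talagrand-bound argument, and finally sharpen this into the asymptotic formula for $Z_N$ via Stirling's formula and a Gaussian rescaling. For the first stage, since $\phi$ has a unique strict maximum at $x^*$ with $\phi''(x^*)<0$, I would choose a compact neighbourhood $U$ of $x^*$ and constants $c,\delta>0$ such that $\phi''<-c$ on $U$ and $\sup_{x\notin U}\phi(x)\leq \phi(x^*)-2\delta$. Uniform convergence of $\phi_N$ and $\phi_N''$ then forces $\phi_N$ to be strictly concave on $U$ and strictly below $\phi_N(x^*)$ outside $U$ for $N$ large, so $\phi_N$ admits a unique global maximizer $x_N^*\in U$ with $\phi_N''(x_N^*)<0$; convergence $x_N^*\to x^*$ follows from the implicit function theorem applied to $\phi_N'(x_N^*)=0$.

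For the concentration, I would bound the numerator of $\mu_N(m_N\in B^c_{N,\alpha}(x_N^*))$ using $A_N(x)\leq e^{-NI(x)}$ from \eqref{talagrand} to get
\begin{equation*}
\sum_{x\in R_N\cap B^c_{N,\alpha}(x_N^*)} A_N(x)\,e^{Nf_N(x)} \leq (N+1)\sup_{B^c_{N,\alpha}(x_N^*)} e^{N\phi_N(x)}.
\end{equation*}
I would split $B^c_{N,\alpha}(x_N^*)$ into a far region $\{|x-x_N^*|\geq\eta\}$, whose contribution is exponentially negligible by Step~1, and the annulus $\{N^{-1/2+\alpha}\leq|x-x_N^*|<\eta\}$, on which the Taylor expansion
\begin{equation*}
\phi_N(x)-\phi_N(x_N^*) = \tfrac{1}{2}(x-x_N^*)^2\phi_N''(x_N^*)+\mathcal{O}((x-x_N^*)^3)
\end{equation*}
combined with concavity of $\phi_N$ on $U$ attains its supremum at the inner boundary, giving $\tfrac{1}{2}N^{2\alpha}\phi_N''(x_N^*)(1+o(1))$; the constraint $\alpha\leq 1/6$ is what makes the cubic correction $\mathcal{O}(N^{-1/2+3\alpha})$ subleading. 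For the denominator, the lower bound in \eqref{talagrand} evaluated at a lattice point within $2/N$ of $x_N^*$ gives $Z_N\geq (L\sqrt{N})^{-1}e^{N\phi_N(x_N^*)-\mathcal{O}(1)}$, and combining with the numerator yields \eqref{conc1}.

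For the asymptotic expansion, \eqref{conc1} shows that the contribution of $B^c_{N,\alpha}(x_N^*)$ to $Z_N$ is exponentially small. On $B_{N,\alpha}(x_N^*)$, $x$ stays in a compact subset of $(-1,1)$, so Stirling's formula refines \eqref{talagrand} to
\begin{equation*}
A_N(x) = \sqrt{\tfrac{2}{\pi N(1-x^2)}}\,e^{-NI(x)}\,(1+\mathcal{O}(1/N)),
\end{equation*}
whence
\begin{equation*}
Z_N = (1+o(1))\sum_{x\in R_N\cap B_{N,\alpha}(x_N^*)} \sqrt{\tfrac{2}{\pi N(1-x^2)}}\,e^{N\phi_N(x)}.
\end{equation*}
Replacing this sum by its Riemann integral (mesh $2/N$) and making the change of variable $y=\sqrt{N}(x-x_N^*)$ reduces the problem to a Gaussian integral over $|y|<N^\alpha$, whose value converges to $\sqrt{2\pi/(-\phi_N''(x_N^*))}/\sqrt{1-(x_N^*)^2}$ up to vanishing boundary contributions; collecting the prefactor $\sqrt{N/2\pi}$ yields \eqref{Z m*}.

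The technical heart of the argument is the control of remainders in the last step. The cubic Taylor term is antisymmetric in $y$ and integrates to a $\mathcal{O}(1/N)$ correction, so the dominant $\mathcal{O}(N^{-1/2+\alpha})$ error comes from the discrete-to-continuous approximation: on $B_{N,\alpha}(x_N^*)$ one has $|f'/f|\sim N|\phi_N'(x)|\lesssim N^{1/2+\alpha}$, which multiplied by the mesh size $2/N$ produces precisely the advertised relative error, as can be made rigorous by an Euler--Maclaurin-type estimate. The constraint $\alpha\leq 1/6$ is imposed exactly so that all Taylor and mesh remainders stay compatible with the Gaussian scaling; this delicate bookkeeping, rather than any single individual estimate, is what I expect to be the main obstacle.
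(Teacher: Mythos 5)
Your proposal follows essentially the same route as the paper's proof: uniform convergence of $\phi_N$ and $\phi_N''$ to localize and uniquify the maximizer, the Talagrand bounds \eqref{talagrand} to reduce the concentration estimate to $\exp\{N(\sup_{B^c_{N,\alpha}}\phi_N-\phi_N(x_N^*))\}\mathcal{O}(N^{3/2})$ with the supremum pinned at the inner boundary of the annulus by local concavity (the paper cites Lemma B.11 of Mukherjee et al.\ for this step, which you replace by the equivalent direct argument), and then Stirling plus a Riemann-sum/Laplace approximation on the shrinking ball for \eqref{Z m*}. The argument is correct and the differences are only cosmetic.
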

\begin{proof}

Let  $x_N^*$ be any maximizer of $\phi_N$ which exists since $[-1,1]$ is compact.  Then there exist a subsequence $\{N_{l}\}_{l\geq 1}$   such that $x^*_{N_l}$ converges to some $y$. We know that $\phi_{N_l}(x^*_{N_l}) \geq \phi_{N_l}(x)$ for all $x\in[-1,1]$, therefore  by uniform convergence and taking  $ l\rightarrow\infty$  we obtain $\phi(y)\geq \phi(x)$ for all $x\in[-1,1]$ and this implies that $y$ is a global maximizer of $\phi(x)$. But $x^*$ is the unique global maximizer of $\phi(x)$, hence $y=x^*$. 

Since $\phi''(x^*)<0$ one has, for $\epsilon$ small enough, $\phi(x)<0$ for any  $x\in[x^*-\epsilon,x^*+\epsilon]$. Let $x_{N}$ and $y_N$ be two global maximizers of $\phi_N$. We already know that $x_N\to x^*$ and $y_N\to x^*$.
Therefore for $N$ large enough 
$x_N,y_N \in[x^*-\epsilon,x^*+\epsilon]$. Using the fact that $\phi''_N$ converges uniformly to $\phi''$ 
one can show that for $N$ large enough 
$\phi_N$ it is strongly convex on   $[x^*-\epsilon,x^*+\epsilon]$ and therefore has unique maximizer which implies that $x_N=y_N$. 

In order to lighten the notation set $B_{N,\alpha}=B_{N,\alpha}(x^*_N)$.    From equations \eqref{appendix gibbs}, \eqref{A-partition} and \eqref{talagrand} we have that,
\begin{equation}\label{prob_m*}
\begin{split}
    \mu_{N}(m_N\in B_{N,\alpha}^c) &= \dfrac{\sum_{x\in R_N\cap B_{N,\alpha}^c}A_N \exp\big\{N\big(f_N(x)\big)\big\}}{\sum_{x\in R_N}A_N \exp\big\{N\big(f_N(x)\big)\big\}}\cr
    & \leq \dfrac{LN^{\frac{1}{2}}(N+1)\sup_{x\in B_{N,\alpha}^c}e^{N\phi_N(x)}}{\sup_{x\in [-1,1]}e^{N\phi_N(x)}}\cr
    & = \exp{\bigg\{N\bigg(\sup_{x\in B_{N,\alpha}^c}\phi_N(x) - \phi_N(x_N^*) \bigg)\bigg\}} \mathcal{O}(N^{\frac{3}{2}}).
\end{split}
\end{equation}
Now by Lemma B.11 in \cite{Mukherjee2021} we know that for $x\in B_{N,\alpha}^c$  the maximizer of $\phi_N(x)$ is,  for $N$ large enough, either $x_N^*-N^{-\frac{1}{2}+\alpha}$ or $x_N^*+N^{-\frac{1}{2}+\alpha}$. This implies that $\sup_{x\in B_{N,\alpha}^c(x^*_N)}\phi_N(x)$ is either $\phi_N(x_N^*-N^{-\frac{1}{2}+\alpha})$ or $\phi_N(x_N^*+N^{-\frac{1}{2}+\alpha})$.   Note that 
$\phi'_N(x_N^*)=0$ since $x_N^*$ is the maximizer and  $\phi_N^{(3)}(x_N^*)$ is uniformly bounded on any closed interval in $(-1,1)$. Hence by a second-order Taylor expansion of $\phi_N(x_N^*\pm N^{-\frac{1}{2}+\alpha})$ at the point $x_N^*$, we have that

\begin{equation}
\phi(x_N^*\pm N^{-\frac{1}{2}+\alpha}) = \phi_N(x_N^*)+\frac{1}{2}N^{-1+2\alpha}\phi_N''(x_N^*)+\mathcal{O}(N^{-\frac{3}{2}+3\alpha}),
\end{equation}
where $\phi''(x_N^*)<0$. This completes the proof of equation \eqref{conc1}  following from equation \eqref{prob_m*}. 

To complete the proof of Lemma \ref{Appendix: lemma ZN exp unique}, let's start by observing that almost all the contribution to $Z_N$ comes from spin configurations having magnetization in a vanishing neighbourhood of the maximizer $x_N^*$, i.e., 
$\mu_{N}(m_N(\sigma)\in B_{N,\alpha})= 1-\mathcal{O}(e^{-N^\alpha})$. Hence,

\begin{equation}\label{z m uniq}
    Z_N =(1+\mathcal{O}(e^{-N^\alpha}))\sum_{x\in R_N\cap B_{N,\alpha} } \underbrace{
    \binom{N}{\frac{N(1+x)}{2}} \exp\big\{N\big(f_N(x)\big)\big\}.
    }_{=\zeta(x)}
\end{equation}
where $\zeta: [-1,1] \to \mathbb{R}$.
With this, one can accurately approximate the partition function over all configurations $\sigma$ whose mean lies within a vanishing neighbourhood of $x^*$ using standard approximation techniques. 

We begin by applying the  Laplace approximation of an integral over a shrinking interval $B_{N,\alpha}$ via the Riemann approximation of the sum in equation \eqref{z m uniq} with an integral and the binomial coefficient can be approximated by the Stirling's approximation method. 
Notice that by the Riemann approximation (see Appendix Lemma A.2 and B.7 of \cite{Mukherjee2021}) of the sum, we have that 
\begin{equation}\label{diff zeta}
\begin{split}
    \Bigg|\int_{B_{N,\alpha}} \zeta(x) dx - \frac{2}{N}\sum_{x\in R_N\cap B_{N,\alpha} } \zeta(x) \Bigg| 
    &\leq \frac{1}{2}(N^{-\frac{1}{2}+\alpha})\cdot N^{-1}\sup_{x\in B_{N,\alpha}}|\zeta'(x)|\cr
    &= \mathcal{O}(N^{-\frac{1}{2}+\alpha}\cdot N^{-1} \cdot N^{\frac{1}{2}+\alpha})\zeta(x_N^*)\cr
    &= \mathcal{O}(N^{-1+2\alpha})\zeta(x_N^*)
\end{split}
\end{equation}

and the binomial coefficient in \eqref{z m uniq} can be approximated as
\begin{equation}\label{bionomial coef muniq}
    \binom{N}{\frac{N(1+x)}{2}} = \sqrt{\frac{2}{\pi N(1-x^2)}} \; e^{-NI(x)} \; (1+\mathcal{O}(N^{-1})).
\end{equation}

It follows from equations \eqref{diff zeta} and \eqref{bionomial coef muniq}  and the Laplace approximation (see Appendix Lemma A.3 of \cite{Mukherjee2021}) of an integral over a shrinking interval $B_{N,\alpha}$ that: 
\begin{equation}\label{sum zeta}
\begin{split}
\sum_{x\in R_N\cap B_{N,\alpha} } \zeta(m) 
&= \frac{N}{2} \int_{B_{N,\alpha}} \zeta(x) dx + \mathcal{O}(N^{2\alpha})\zeta(x_N^*)\cr
&= \frac{N}{2}\int_{B_{N,\alpha}} \sqrt{\frac{2}{\pi N(1-x^2)}} e^{N\phi_N(x)} (1+\mathcal{O}(N^{-1})) dx \cr
& \quad + \mathcal{O}(N^{2\alpha}) \cdot \Bigg[ \sqrt{\frac{2}{\pi N(1-x_N^{*^2})}} e^{N\phi_N(x_N^*)}\; (1+\mathcal{O}(N^{-1}))\Bigg] \cr
&= \frac{\sqrt{N}}{2} \sqrt{\frac{2\pi}{N|\phi_N''(x_N^*)|}} \sqrt{\frac{2}{\pi(1-x_N^{*^2})}} e^{N\phi_N(x_N^*)}\; (1+\mathcal{O}(N^{-\frac{1}{2}+3\alpha})) \cr 
&= \frac{e^{N\phi_N(x_N^*)}}{\sqrt{(x_N^{*^2} -1)\phi_N''(x_N^*)}}\cdot (1+\mathcal{O}(N^{-\frac{1}{2}+3\alpha})).
\end{split}
\end{equation}
Therefore,
\begin{equation}\label{Z m* approx}
Z_N =\frac{e^{N\phi_N(x_N^*)}}{\sqrt{(x_N^{*^2} -1)\phi''(x_N^*)}}\cdot (1+\mathcal{O}(N^{-\frac{1}{2}+3\alpha})).
\end{equation}
This completes the proof of Lemma \ref{Appendix: lemma ZN exp unique}.
\end{proof}

\begin{lemma}\label{Appendix: lemma ZN exp many max}
Suppose $\phi(x)$ has $S\in\mathbb{N}$ global maximizers $x_i$  such that $\phi''(x_i)<0$. For $i\leq S$, let  $A_i\subset[-1,1]$ be an interval    such that $x_i\in int (A_i)$ is the unique maximizer of $\phi$ on $cl(A_i)$. Then for $N$ large enough $\phi_N$ has a unique global maximizer $x_{i,N}\to x_i$ on $A_i$ with $\phi''_N(x_{i,N})<0$ and for $\alpha\in\big(0,\frac{1}{6}\big]$, one has 

\begin{equation}\label{gibbs mi}
   \mu_N(m_N\in B^c_{N,\alpha,S}) = \exp\bigg\{\frac{1}{2}N^{2\alpha}\max_{i\leq S}\phi''_N(x_{i,N})\bigg\}\mathcal{O}(N^{\frac{3}{2}})
\end{equation}
where $B_{N,\alpha,S}=\bigcup_{i\leq S} B_{N,\alpha}(x_{i,N})$, moreover  the restricted partition function on $A_i$  can be expanded as,
\begin{equation}\label{Z|A mi}
    Z_N\big|_{A_i}= \dfrac{e^{N\phi_N(x_{i,N})}}{\sqrt{(x_{i,N}^{2}-1)\phi_N''(x_{i,N})}}\bigg(1+\mathcal{O}\bigg(N^{-\frac{1}{2}+\alpha}\bigg) \bigg)
\end{equation}
and the unrestricted partition function can be expanded as,
\begin{equation}\label{Z mi}
    Z_N = \sum_{i\leq S} \dfrac{e^{N\phi_N(x_{i,N})}}{\sqrt{(x_{i,N}^{2}-1)\phi_N''(x_{i,N})}}\bigg(1+\mathcal{O}\bigg(N^{-\frac{1}{2}+\alpha}\bigg) \bigg).
\end{equation}
Note that, here, $int (A_i)$ and $cl(A_i)$ denote the interior and closure of $A_i$, respectively.
\end{lemma}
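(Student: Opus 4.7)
The strategy is to reduce the multi-maximizer statement to several parallel applications of Lemma~\ref{Appendix: lemma ZN exp unique}, each carried out on one of the intervals $A_i$, and then glue the local expansions together.

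First I would localize. Fix $i\leq S$. Since $x_i\in\mathrm{int}(A_i)$ is the unique maximizer of $\phi$ on $\mathrm{cl}(A_i)$ with $\phi''(x_i)<0$, the same compactness-plus-uniform-convergence argument that opens the proof of Lemma~\ref{Appendix: lemma ZN exp unique}, applied to $\phi_N$ restricted to $\mathrm{cl}(A_i)$, supplies for $N$ large enough a maximizer $x_{i,N}\in\mathrm{cl}(A_i)$ with $x_{i,N}\to x_i$. Eventually $x_{i,N}$ lies in $\mathrm{int}(A_i)$, and the uniform convergence $\phi_N''\to\phi''$ together with strict concavity of $\phi$ on a neighbourhood of $x_i$ gives strong concavity of $\phi_N$ there, which yields both uniqueness of $x_{i,N}$ on $\mathrm{cl}(A_i)$ and $\phi_N''(x_{i,N})<0$.

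Next I would expand the restricted partition function $Z_N|_{A_i}$. The plan is to rerun the chain of approximations leading from \eqref{z m uniq} to \eqref{Z m* approx}, replacing the ball $B_{N,\alpha}$ by $B_{N,\alpha}(x_{i,N})$ and restricting the outer sum to $R_N\cap A_i$. The one point that needs verification is that the $A_i$-mass outside $B_{N,\alpha}(x_{i,N})$ is exponentially suppressed: this is the boundary-maximum observation (Lemma B.11 of \cite{Mukherjee2021}) combined with a second-order Taylor expansion at $x_{i,N}$, valid because $x_{i,N}$ is the unique maximizer of $\phi_N$ on $\mathrm{cl}(A_i)$. Once this is in place, Riemann-approximating the sum by an integral, Stirling-approximating the binomial coefficient, and Laplace-integrating against the Gaussian profile coming from $\phi_N''(x_{i,N})<0$ produces \eqref{Z|A mi} directly.

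Third, I would prove the concentration estimate \eqref{gibbs mi}. Setting $B:=B_{N,\alpha,S}$, decompose $[-1,1]\setminus B = R_1\cup R_2$, where $R_1 = [-1,1]\setminus\bigcup_i A_i$ and $R_2 = \bigcup_i\bigl(A_i\setminus B_{N,\alpha}(x_{i,N})\bigr)$. On $R_1$ the strict gap $\eta := \max_i\phi(x_i) - \sup_{R_1}\phi > 0$, together with uniform closeness of $\phi_N$ to $\phi$, bounds the contribution by $e^{N(\max_i\phi_N(x_{i,N}) - \eta/2)}$, exponentially smaller than the terms coming from $R_2$. On $R_2$, the boundary-maximum plus Taylor argument of Lemma~\ref{Appendix: lemma ZN exp unique} applied locally on each $A_i$ yields a numerator contribution of order $e^{N\phi_N(x_{i,N})}\,e^{\frac{1}{2}N^{2\alpha}\phi_N''(x_{i,N})}\mathcal{O}(N^{3/2})$. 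Dividing by $Z_N \gtrsim \sum_i e^{N\phi_N(x_{i,N})}/\sqrt{N}$ (via \eqref{Z|A mi}) and keeping the least-negative curvature $\max_{i\leq S}\phi_N''(x_{i,N})$ gives \eqref{gibbs mi}. Finally, \eqref{Z mi} follows by writing $Z_N = \sum_i Z_N|_{A_i} + Z_N|_{R_1}$ and using the $R_1$ bound to absorb the tail into the stated error.

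The main obstacle is the bookkeeping in the $R_1$ part of the concentration step: one must combine a global statement (the strict gap $\eta>0$) with the local boundary-Taylor estimates on each $A_i$ and verify that both survive division by $Z_N$ with the correct leading exponent. Everything else is a parallel, localised rerun of the unique-maximizer argument.
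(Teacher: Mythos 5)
Your plan is correct and follows essentially the same route as the paper's proof: localize to each $cl(A_i)$, rerun the existence/uniqueness and the Riemann--Stirling--Laplace expansion of Lemma~\ref{Appendix: lemma ZN exp unique} there to obtain \eqref{Z|A mi}, and assemble \eqref{gibbs mi} and \eqref{Z mi} from the local pieces. The only substantive difference is that you explicitly bound the contribution of $R_1=[-1,1]\setminus\bigcup_i A_i$ via a strict energy gap, whereas the paper conditions on the events $\{m_N\in A_i\}$ and writes $Z_N=\sum_i Z_N\big|_{A_i}$, implicitly treating the $A_i$ as covering $[-1,1]$; your extra step is the more careful version and is exactly what is needed when the $A_i$ do not exhaust the interval.
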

\begin{proof}

 The fact that for $N$ large enough $\phi_N$ has a unique maximizer $x_{i,N}\to x_i$ with $\phi''(x_{i,N})<0$ can be proved applying to the function $\phi_N$  restricted to $cl(A_i)$ and using the same argument of Lemma \ref{Appendix: lemma ZN exp unique}. 
 
 Clearly, for $N$ large enough, $B_{N,\alpha}(x_{i,N})\subset A_i$ and 
\begin{equation}\label{conc mi}
    \mu_{N}(m_N(\sigma)\in B_{N,\alpha}^c(x_{i,N})|m_N(\sigma)\in A_i) = \exp\bigg\{\frac{1}{2}N^{2\alpha}\phi_N''(x_{i,N})\bigg\}\mathcal{O}(N^{\frac{3}{2}})
\end{equation}
following a step-by-step argument used to prove equation \eqref{conc1}.

Now, for $i \leq S$ and $N$ large enough, one has that $A_i\setminus B_{N,\alpha}(x_{i,N})= A_i\setminus B_{N,\alpha,S}$ and then 
$ \mu_N(m_N(\sigma)\in B_{N,\alpha}^c(x_{i,N})\big|m_N(\sigma)\in A_i) =  \mu_N(m_N(\sigma)\in B_{N,\alpha,S}^c|m_N(\sigma)\in A_i) $. Therefore,

\begin{equation}\label{prob_mi}
\begin{split}
    \mu_{N}(m_N(\sigma)\in B_{N,\alpha,S}^c) &= \sum_{1\leq i \leq S} \mu_{N}(m_N(\sigma)\in B_{N,\alpha,S}^c\big|m_N(\sigma)\in A_i) \mu_{N}(m_N(\sigma) \in A_i)\cr
    & \leq \exp\bigg\{\frac{1}{2}N^{2\alpha}\max_{1\leq i \leq S}\phi''(x_i)\bigg\}\mathcal{O}(N^{\frac{3}{2}}) \sum_{1\leq i \leq S}\mu_{N}(m_N(\sigma) \in A_i) \cr
    & = \exp\bigg\{\frac{1}{2}N^{2\alpha}\max_{1\leq i \leq S}\phi''(x_i)\bigg\}\mathcal{O}(N^{\frac{3}{2}}).
\end{split}
\end{equation}
This completes the proof of equation \eqref{gibbs mi} following from equation \eqref{prob_mi}.

The proof for the asymptotic expansion of the partition function when there are multiple global maximizers of $\phi$ follows exactly the same argument for the case with unique global maximizer. 
Note that for fixed $i\leq S$ and $N$ large,  $m_N(\sigma)$ concentrates around $x_i\in A_i$ as it was shown in equation \eqref{conc mi}. Hence,
\begin{equation}
    \mu_{N}(m_N(\sigma)\in B_{N,\alpha}\big|m_N(\sigma)\in A_i)= \dfrac{1}{Z_N\big|_{A_i}} \sum_{x\in R_N\cap B_{N,\alpha}} \binom{N}{\frac{N(1+x)}{2}} \exp\big\{N\big(f_N(x)\big)\big\}.
\end{equation}
Now, following the exact computation and argument in Lemma \ref{Appendix: lemma ZN exp unique}, we have that the restricted partition function for each of the global maximizers $x_i$ can be expanded as

\begin{equation}\label{z mi}
    Z_N\big|_{A_i} = \frac{e^{N\phi_N(x_{i,N})}}{\sqrt{(x_{i,N}^{2} -1)\phi''(x_{i,N})}}\cdot (1+\mathcal{O}(N^{-\frac{1}{2}+3\alpha})).
\end{equation}
Assuming that $m_N(\sigma)$ concentrates around $S$ global maximizers $x_{i,N}$ for $i\leq S$ then, equation \eqref{Z mi} follows from  \eqref{z mi}. Hence, we have 

\begin{equation}
    Z_N = \sum_{i\leq S} Z_{N}\big|_{A_i}.
\end{equation}
\end{proof}

\section*{Acknowledgement}
Two of the authors (P.C. and G.O.) thank Janos Kertész and Cecilia Vernia for insightful discussions on the topic. This work was partially supported by Gruppo Nazionale di Fisica matematica. The authors were partially  supported by the EU H2020 ICT48 project Humane AI Net contract number 952026; by the Italian Extended Partnership PE01 - FAIR Future Artificial Intelligence Research - Proposal code PE00000013 under the MUR National Recovery and Resilience Plan; by the project PRIN 2022 - Proposal code: J53D23003690006.

\end{document}